\newcommand{\pushright}[1]{\ifmeasuring@#1\else\omit\hfill$\displaystyle#1$\fi\ignorespaces}
\newcommand{\pushleft}[1]{\ifmeasuring@#1\else\omit$\displaystyle#1$\hfill\fi\ignorespaces}
\newcommand{\proj}{\ensuremath{e}}
\newcommand{\e}[2]{\ensuremath{\proj_{#1}^{(#2)}}}
\DeclareMathOperator{\argmin}{argmin}
\DeclareMathOperator{\wops}{{\bf W}}
\DeclareMathOperator{\ops}{{\mathcal{O}}}
\newcommand{\insta}{\ensuremath{I}}
\DeclareMathOperator{\rel}{{\bf R}}
\newcommand{\tup}[1]{\ensuremath{\mathbf #1}}
\newcommand{\tuple}[1]{\ensuremath{(#1)}}
\newcommand{\ignore}[1]{}
\DeclareMathOperator{\wrel}{{\bf \Phi}}
\DeclareMathOperator{\supp}{{\rm supp}}
\DeclareMathOperator{\VCSP}{VCSP}
\DeclareMathOperator{\wPol}{wPol}
\DeclareMathOperator{\Feas}{Feas}
\DeclareMathOperator{\Opt}{Opt}
\DeclareMathOperator{\dom}{{Feas}}
\DeclareMathOperator{\wClone}{wClone}
\DeclareMathOperator{\Imp}{Imp}
\DeclareMathOperator{\Inv}{Inv}
\DeclareMathOperator{\Pol}{Pol}
\DeclareMathOperator{\wRelClone}{wRelClone}
\newcommand{\qq}{\ensuremath{\overline{\mathbb{Q}}}}
\newcommand{\q}{\ensuremath{\mathbb{Q}}}
\newcommand{\rr}{\ensuremath{\overline{\mathbb{R}}}}
\renewcommand{\r}{\ensuremath{\mathbb{R}}}
\newcommand{\eps}{\chi}
\newcommand{\JD}{\ensuremath{\mathbf{J}_D}}
\newcommand{\costeq}[1]{\ensuremath{#1_{\sim}}}
\newcommand{\Rwrel}{\ensuremath{\wrel^\r}}
\newcommand{\Rwops}{\ensuremath{\wops^\r}}
\newcommand{\RImp}{\ensuremath{\Imp_\r}}
\newcommand{\RwPol}{\ensuremath{\wPol_\r}}
\newcommand{\RwRelClone}{\ensuremath{\wRelClone_\r}}
\newcommand{\RwClone}{\ensuremath{\wClone_\r}}
\newcommand{\inn}[2]{\ensuremath{\langle#1,#2\rangle}}
\theoremstyle{plain}
\newtheorem{theorem}{Theorem}
\newtheorem{corollary}[theorem]{Corollary}
\newtheorem{lemma}[theorem]{Lemma}
\theoremstyle{definition}
\newtheorem{definition}[theorem]{Definition}
\newtheorem{example}[theorem]{Example}
\newtheorem*{theorem*}{Theorem}
\newlist{props}{enumerate}{1}
\setlist[props]{label={\rm (\roman*)}}
\newcommand{\ignore}[1]{}
\DeclareMathOperator*{\argmin}{arg\,min}
\newcommand{\tuple}[1]{\ensuremath{\langle #1 \rangle}}
\newcommand{\VCSP}[1]{{VCSP}(\ensuremath{#1})}
\begin{document}

\title{\Large A Galois Connection for Weighted (Relational) Clones of Infinite
Size\thanks{The authors were supported by a Royal Society Research Grant.
Stanislav \v{Z}ivn\'y was supported by a Royal Society University Research
Fellowship. A part of this work appeared in \emph{Proceedings of the 42nd International Colloquium on Automata, Languages, and
Programming (ICALP)}, 2015~\cite{fz15:icalp}.}}

\author{Peter Fulla\\
%Department of Computer Science,
University of Oxford, UK\\
\texttt{peter.fulla@cs.ox.ac.uk}
\and
Stanislav \v{Z}ivn\'y\\
%Department of Computer Science,
University of Oxford, UK\\
\texttt{standa.zivny@cs.ox.ac.uk}
}

\date{}

\maketitle

\begin{abstract}
A Galois connection between clones and relational clones on a fixed finite
domain is one of the cornerstones of the so-called algebraic approach to the
computational complexity of non-uniform Constraint Satisfaction Problems (CSPs).
Cohen et al. established a Galois connection between \emph{finitely-generated}
weighted clones and \emph{finitely-generated} weighted relational
clones~[SICOMP'13], and asked whether this connection holds in general. We
answer this question in the affirmative for weighted (relational) clones with
\emph{real} weights and show that the complexity of the corresponding valued
CSPs is preserved.
\end{abstract}

\section{Introduction}\label{sec:intro}

The constraint satisfaction problem (CSP) is a general framework capturing
decision problems arising in many contexts of computer
science~\cite{Hell08:survey}. The CSP is NP-hard in general but there has been
much success in finding tractable fragments of the CSP by restricting the types
of relations allowed in the constraints.  A set of allowed relations has been
called a \emph{constraint language}~\cite{Feder98:monotone}. For some constraint
languages, the associated constraint satisfaction problems with constraints
chosen from that language are solvable in polynomial-time, whilst for other
constraint languages this class of problems is NP-hard~\cite{Feder98:monotone};
these are referred to as \emph{tractable languages} and \emph{NP-hard
languages}, respectively. Dichotomy theorems, which classify each possible
constraint language as either tractable or NP-hard, have been established for
constraint languages over two-element domains~\cite{Schaefer78:complexity},
three-element domains~\cite{Bulatov06:3-elementJACM}, for conservative
(containing all unary relations) constraint
languages~\cite{Bulatov11:conservative}, for maximal constraint
languages~\cite{Bulatov01:complexity,Bulatov04:graph}, for graphs (corresponding
to languages containing a single binary symmetric
relation)~\cite{Hell90:h-coloring}, and for digraphs
(corresponding to languages containing a single binary relation)
without sources and sinks~\cite{Barto09:siam}. The most successful approach to classifying the
complexity of constraint languages has been the algebraic
approach~\cite{Jeavons97:closure,Bulatov05:classifying,Barto14:jacm}. The
dichotomy conjecture of Feder and Vardi~\cite{Feder98:monotone} asserts that every
constraint language is either tractable or NP-hard, and the algebraic refinement
of the conjecture specifies the precise boundary between tractable and NP-hard
languages~\cite{Bulatov05:classifying}.

The \emph{valued} constraint satisfaction problem (VCSP) is a general framework
that captures not only feasibility problems but also optimisation
problems~\cite{Cohen06:complexitysoft,z12:complexity,jkz14:beatcs}. A VCSP
instance represents each constraint by a \emph{weighted relation}, which is a
$\qq$-valued function where $\qq=\mathbb{Q}\cup\{\infty\}$, and the goal is to
find a labelling of variables minimising the sum of the values assigned by the
constraints to that labelling.
Tractable fragments of the VCSP have
been identified by restricting the types of allowed weighted relations that can
be used to define the valued constraints. A set of allowed weighted relations
has been called a \emph{valued constraint
language}~\cite{Cohen06:complexitysoft}. 
Dichotomy theorems, which
classify each possible valued constraint language as either tractable or
NP-hard, have been established for valued constraint languages over two-element
domains~\cite{Cohen06:complexitysoft}, for conservative (containing all
$\{0,1\}$-valued unary cost functions) valued constraint
languages~\cite{kz13:jacm}, for finite-valued (all weighted relations
are $\q$-valued) constraint languages~\cite{tz13:stoc}. Moreover, it has been
shown that a dichotomy for constraint languages implies a dichotomy for
valued constraint languages~\cite{Kolmogorov15:general-valued}.
Finally, the power of the basic linear programming
relaxation~\cite{tz12:focs,ktz15:sicomp} and the power of the Sherali-Adams
relaxation~\cite{tz15:icalp,tz15:power} for valued constraint languages have been
completely characterised.

Cohen et al. have introduced an algebraic theory of weighted
clones~\cite{cccjz13:sicomp}, further extended in~\cite{tz15:sidma,Kozik15:icalp}, for classifying the computational complexity of
valued constraint languages. This theory establishes a one-to-one correspondence
between valued constraint languages closed under expressibility (which does not
change the complexity of the associated class of optimisation problems), called
weighted relational clones, and weighted clones~\cite{cccjz13:sicomp}. This is
an extension of (a part of) the algebraic approach to CSPs which relies on a
one-to-one correspondence between constraint languages closed under
pp-definability (which does not change the complexity of the associated class of
decision problems), called relational clones, and
clones~\cite{Bulatov05:classifying}, thus making it possible to use deep results
from universal algebra. This theory has been developed primarily as an aid for
studying the computational complexity of valued CSPS (and indeed, recent
progress on valued
CSPs~\cite{tz12:focs,tz13:stoc,ktz15:sicomp,tz15:icalp,Kolmogorov15:general-valued}
and on special cases of valued CSPs~\cite{Uppman13:icalp} heavily rely on the
theory introduced in~\cite{cccjz13:sicomp}), but the theory is interesting in its own
right~\cite{tz15:sidma,Kozik15:icalp,Vancura14:msc,Vaicenavicius14:msc}.

\subsection*{Contributions}
The Galois connection between weighted clones and weighted relational clones
established in~\cite{cccjz13:sicomp} was proved only for weighted (relational)
clones generated by a \emph{finite} set. The authors asked whether
such a correspondence holds also for weighted (relational) clones in general. In
this paper we answer this question in the affirmative.

Firstly, we show that the Galois connection from~\cite{cccjz13:sicomp} (using
only rational weights) does \emph{not} work for general weighted (relational)
clones. Secondly, we alter the definition of weighted (relational) clones and
establish a new Galois connection that holds even when the generating set has an
infinite size. We allow weighted relations and weightings to assign real weights
instead of rational, require weighted relational clones to be closed under
operator $\Opt$ and to be topologically closed, and prove that these changes
preserve tractability of a constraint language (albeit only in an approximate
sense).

Including the $\Opt$ operator (sometimes called $\argmin$)\footnote{Given a
$k$-ary function $f:D^k\to\qq$, $\Opt(f)$ is the $k$-ary relation over $D$ of
minimal-value tuples of $f$.}
in the definition of weighted relational clones
simplifies the structure of the space of all weighted clones, and guarantees
that every non-projection polymorphism of a weighted relational clone $\Gamma$
is assigned a positive weight by some weighted polymorphism of $\Gamma$. Indeed,
including the $\Opt$ operator is very natural and can be used to simplify several
results in~\cite{Kozik15:icalp}.
Real weights (as opposed to rational weights)
were previously used, in the context of valued CSPs, in~\cite{tz13:stoc} and our
results confirm that real weights are necessary when studying infinite weighted
(relational) clones.

The proof of the Galois connection in~\cite{cccjz13:sicomp} relies on results on
linear programming duality; we use their generalisation from the theory of
convex optimisation in order to establish the connection even for infinite sets.

\section{Background}
\label{sec:prelim}

\subsection{Valued CSPs}

Throughout the paper, let $D$ be a fixed finite set of size at least two.

\begin{definition}\label{def:rel}
An $m$-ary \emph{relation}%
\footnote{An $m$-ary relation $R$ over $D$ is commonly defined as a subset of
$D^m$. For the corresponding mapping $\phi : D^m \to \{0, \infty\}$ it holds
$\phi(\tup{x}) = 0$ when $\tup{x} \in R$ and $\phi(\tup{x}) = \infty$ otherwise.
We shall use both definitions interchangeably. Because two mappings that differ
only by a constant are usually equivalent for our purposes, we consider mappings
$D^m \to \{c, \infty\}$ to be relations even if $c \neq 0$.
}
over $D$ is any mapping $\phi:D^m\to\{c,\infty\}$ for
some $c\in\mathbb{Q}$.
We denote by $\rel_D^{(m)}$ the set of all $m$-ary relations and let
$\rel_D=\bigcup_{m\geq 1}{\rel_D^{(m)}}$.
\end{definition}

Given an $m$-tuple $\tup{x}\in D^m$, we denote its $i$th entry by $\tup{x}[i]$ for $1\leq i\leq m$.

Let $\qq=\mathbb{Q}\cup\{\infty\}$ denote the set of rational numbers with (positive) infinity.

\begin{definition}\label{def:wrel}
An $m$-ary \emph{weighted relation} over $D$ is any mapping $\gamma:D^m\to\qq$.
We denote by $\wrel_D^{(m)}$ the set of all $m$-ary weighted relations and let
$\wrel_D=\bigcup_{m\ge 1}{\wrel_D^{(m)}}$.
\end{definition}

From Definition~\ref{def:wrel} we have that relations are a special type of
weighted relations.

\begin{example}
An important example of a (weighted) relation is the binary equality $\phi_=$ on
$D$ defined by
$\phi_=(x,y)=0$ if $x=y$ and $\phi_=(x,y)=\infty$ if $x\neq y$.

Another example of a relation is the unary empty relation $\phi_\emptyset$
defined on $D$ by $\phi_\emptyset(x)=\infty$ for all $x\in D$.
\end{example}

For any $m$-ary weighted relation $\gamma\in\wrel_D^{(m)}$, we denote by
$\Feas(\gamma)=\{\tup{x}\in D^m\:|\:\gamma(\tup{x})<\infty\}\in\rel_D^{(m)}$ the
underlying \emph{feasibility relation}, and by $\Opt(\gamma)=\{\tup{x}\in
\Feas(\gamma)\:|\:\gamma(\tup{x})\leq \gamma(\tup{y}) \mbox{ for every }\tup{y}\in
D^m\} \in\rel_D^{(m)}$ the relation of minimal-value tuples.

\begin{definition}
Let $V=\{x_1,\ldots, x_n\}$ be a set of variables. A \emph{valued constraint} over $V$ is an expression
of the form $\gamma(\tup{x})$ where $\gamma\in \wrel_D^{(m)}$ and $\tup{x}\in V^m$. The number $m$ is called the \emph{arity} of the constraint,
the weighted relation $\gamma$ is called the \emph{constraint weighted relation},
and the tuple $\tup{x}$ the \emph{scope} of the constraint.
\end{definition}

We call $D$ the \emph{domain}, the elements of $D$ \emph{labels} (for variables), and say that the
weighted relations in $\wrel_D$ take \emph{values} or \emph{weights}.

\begin{definition}
An instance of the \emph{valued constraint satisfaction problem} (VCSP) is specified
by a finite set $V=\{x_1,\ldots,x_n\}$ of variables, a finite set $D$ of labels,
and an \emph{objective function} $\insta$
expressed as follows:
\begin{equation}
\insta(x_1,\ldots, x_n)=\sum_{i=1}^q{\gamma_i(\tup{x}_i)}\,,
\label{eq:sepfun}
\end{equation}
where each $\gamma_i(\tup{x}_i)$, $1\le i\le q$, is a valued constraint over $V$.
Each constraint can appear multiple times in $\insta$.

The goal is to find an \emph{assignment} (or a \emph{labelling}) of labels to the variables that minimises $\insta$.
\end{definition}

CSPs are a special case of VCSPs using only (unweighted) relations with the goal
to determine the existence of a feasible assignment.

\begin{definition}
Any set $\Gamma\subseteq\wrel_D$ is called a \emph{(valued) constraint
language} over $D$, or simply a \emph{language}. We
will denote by $\VCSP(\Gamma)$ the class of all VCSP instances in which the
constraint weighted relations are all contained in $\Gamma$.
\end{definition}

\begin{definition}\label{defn:tract}
A constraint language $\Gamma$ is called \emph{tractable} if
$\VCSP(\Gamma')$ can be solved (to optimality) in
polynomial time for every finite subset $\Gamma'\subseteq\Gamma$, and $\Gamma$
is called \emph{intractable} if $\VCSP(\Gamma')$ is NP-hard for some finite
$\Gamma'\subseteq\Gamma$.
\end{definition}

We refer the reader to a recent survey~\cite{jkz14:beatcs} for more information
on the computational complexity of constraint languages.

\subsection{Weighted relational clones}

\begin{definition}
A weighted relation $\gamma$ of arity $r$ can be obtained by \emph{addition}
from the weighted relation $\gamma_1$ of arity $s$ and the weighted relation
$\gamma_2$ of arity $t$ if $\gamma$ satisfies the identity
\begin{equation}
\gamma(x_1,\dots,x_r) = \gamma_1(y_1,\dots,y_s) + \gamma_2(z_1,\dots,z_t)
\end{equation}
for some (fixed) choice of $y_1,\dots,y_s$ and $z_1,\dots,z_t$ from amongst
$x_1,\dots,x_r$.
\end{definition}

\begin{definition}
A weighted relation $\gamma$ of arity $r$ can be obtained by \emph{minimisation}
from the weighted relation $\gamma'$ of arity $r+s$ if $\gamma$ satisfies the
identity
\begin{equation}
\gamma(x_1,\dots,x_r) =
\min_{(y_1,\dots,y_s)\in D^s} \gamma'(x_1,\dots,x_r,y_1,\dots,y_s) \,.
\end{equation}
\end{definition}

\begin{definition}
\label{defn:wrelclone}
A constraint language $\Gamma \subseteq \wrel_D$ is called a \emph{weighted
relational clone} if it contains the binary equality relation $\phi_=$ and the
unary empty relation $\phi_\emptyset$,%
\footnote{Although the definition in~\cite{cccjz13:sicomp} does not require
the inclusion of $\phi_\emptyset$, the proofs there implicitly assume its presence
in any weighted relational clone.}
and is closed under addition,
minimisation, scaling by non-negative rational constants, and addition of
rational constants.

For any $\Gamma$, we define $\wRelClone(\Gamma)$ to be the smallest weighted relational clone containing $\Gamma$.
\end{definition}

Note that for any weighted relational clone $\Gamma$, if $\gamma\in\Gamma$ then
$\Feas(\gamma)\in\Gamma$ as $\Feas(\gamma)=0\gamma$ (we define $0\cdot\infty =
\infty$).

\begin{definition}
\label{defn:express}
Let $\Gamma \subseteq \wrel_D$ be a constraint language, $I \in \VCSP(\Gamma)$
an instance with variables $V$, and $L = \tuple{v_1, \dots, v_r}$ a list of
variables from $V$. The \emph{projection} of $I$ onto $L$, denoted $\pi_L(I)$,
is the $r$-ary weighted relation on $D$ defined as
\begin{equation}
\pi_L(I)(x_1,\dots,x_r) =
\min_{\{s:V\to D ~|~ \tuple{s(v_1),\dots,s(v_r)}=\tuple{x_1,\dots,x_r}\}}
I(s) \,.
\end{equation}
We say that a weighted relation $\gamma$ is \emph{expressible} over a constraint
language $\Gamma$ if $\gamma = \pi_L(I)$ for some $I \in \VCSP(\Gamma)$ and list
of variables $L$. We call the pair $\tuple{I,L}$ a \emph{gadget} for expressing
$\gamma$ over $\Gamma$.
\end{definition}

The list of variables $L$ in a gadget may contain repeated entries. The minimum
over an empty set is $\infty$.
\begin{example}
\label{ex:weightedequality}
For any $\Gamma \subseteq \wrel_D$, we can express the binary equality relation
$\phi_=$ on $D$ over language $\Gamma$ using the following gadget. Let $I \in
\VCSP(\Gamma)$ be the instance with a single variable $v$ and no constraints,
and let $L = \tuple{v,v}$. Then, by \Cref{defn:express}, $\pi_L(I) = \phi_=$.
\end{example}

We may equivalently define a weighted relational clone as a set $\Gamma
\subseteq \wrel_D$ that contains the unary empty relation $\phi_\emptyset$ and
is closed under expressibility, scaling by non-negative rational constants, and
addition of rational constants~\cite[Proposition 4.5]{cccjz13:sicomp}.

The following result has been shown in~\cite{cccjz13:sicomp}.

\begin{theorem}\label{thm:relclonetract}
A constraint language $\Gamma$ is tractable if and only if $\wRelClone(\Gamma)$
is tractable, and $\Gamma$ is intractable if and only if $\wRelClone(\Gamma)$ is
intractable.
\end{theorem}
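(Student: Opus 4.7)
The plan is to reduce both iffs to a single polynomial-time reduction. Since $\Gamma\subseteq\wRelClone(\Gamma)$, tractability of the larger language descends to the smaller one and intractability of the smaller language ascends to the larger, so it suffices to prove the following claim: \emph{for every finite $\Gamma'\subseteq\wRelClone(\Gamma)$ there is a finite $\Gamma_0\subseteq\Gamma$ and a polynomial-time reduction from $\VCSP(\Gamma')$ to $\VCSP(\Gamma_0)$.} Given this claim, if $\Gamma$ is tractable then $\VCSP(\Gamma_0)$ and hence $\VCSP(\Gamma')$ is in P, while if $\VCSP(\Gamma')$ is NP-hard then so is $\VCSP(\Gamma_0)$, witnessing intractability of $\Gamma$.

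To construct $\Gamma_0$ I would use the generative description of $\wRelClone(\Gamma)$: every element is obtained from $\Gamma\cup\{\phi_=,\phi_\emptyset\}$ by a finite sequence of additions, minimisations, non-negative rational scalings, and additions of rational constants. Since $\Gamma'$ is finite, only finitely many members of $\Gamma$ appear in the derivations of its elements, and I collect them into $\Gamma_0$. Invoking the equivalent characterisation of weighted relational clones in terms of expressibility~\cite[Proposition 4.5]{cccjz13:sicomp}, for each $\gamma\in\Gamma'$ I record a gadget $(I_\gamma,L_\gamma)$ over $\Gamma_0\cup\{\phi_=,\phi_\emptyset\}$ together with rationals $\alpha_\gamma\ge 0$ and $c_\gamma$ such that $\gamma=\alpha_\gamma\,\pi_{L_\gamma}(I_\gamma)+c_\gamma$.

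The reduction itself is gadget substitution combined with clearing denominators. Given an instance $J\in\VCSP(\Gamma')$ with objective $\sum_i\gamma_i(\tup{x}_i)$, I let $Q$ be a common denominator of the $\alpha_{\gamma_i}$ (a constant depending only on $\Gamma'$) and construct $J'\in\VCSP(\Gamma_0)$ by inserting $Q\alpha_{\gamma_i}$ copies of the gadget $I_{\gamma_i}$ for each original constraint, identifying $L_{\gamma_i}$ with $\tup{x}_i$ and renaming the remaining variables to fresh auxiliaries private to that copy. Privacy lets the outer minimisation commute with the inner ones, so the optimum of $J'$ equals $Q\bigl(\min J-\sum_i c_{\gamma_i}\bigr)$; since multiplying by $Q>0$ and adding a constant preserves argmin, optimal labellings transfer back to $J$. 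The construction runs in polynomial time because $Q$, every $Q\alpha_{\gamma_i}$, and every gadget size are bounded by constants depending only on $\Gamma'$.

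The delicate step is precisely the handling of rational scaling: it has no native combinatorial counterpart on the VCSP side and must be absorbed by duplicating constraints after rescaling the whole objective. Finiteness of $\Gamma'$ is essential here to keep the common denominator bounded, which is exactly why tractability is defined via finite subsets in \Cref{defn:tract}. A minor secondary point is to check that $\Feas(\gamma_i)$ propagates correctly through the substitution so that the infinities match on both sides; this follows from the behaviour of $\Feas$ under addition and minimisation.
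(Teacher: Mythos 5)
Your proof is correct and takes essentially the same approach as the original argument in~\cite{cccjz13:sicomp} (the paper merely cites that result rather than reproving it): express each $\gamma\in\Gamma'$ in the normal form $\alpha\,\pi_L(I)+c$ over a finite $\Gamma_0\subseteq\Gamma$, substitute gadgets with private auxiliaries so the inner and outer minimisations commute, and absorb the rational scaling by replicating gadgets an integer number of times after clearing denominators by a factor $Q$ depending only on $\Gamma'$. The only step you pass over quickly is deriving that normal form from the three closure conditions of~\cite[Proposition~4.5]{cccjz13:sicomp}, which does require the same denominator-clearing idea applied once inside the closure argument itself; and strictly speaking the reduction should special-case the degenerate situation where some $\gamma_i\equiv\infty$ (its gadget uses $\phi_\emptyset$), since $\Gamma_0$ alone may admit no infeasible instance — one simply reports the optimum $\infty$ directly in that case.
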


Consequently, when trying to identify tractable constraint languages, it is
sufficient to consider only weighted relational clones.

\subsection{Weighted clones}

Any mapping $f:D^k\rightarrow D$ is called a $k$-ary \emph{operation}. We will
apply a $k$-ary operation $f$ to $k$ $m$-tuples $\tup{x_1},\ldots,\tup{x_k}\in
D^m$ coordinatewise, that is,
\begin{equation}
f(\tup{x_1},\ldots,\tup{x_k})=(f(\tup{x_1}[1],\ldots,\tup{x_k}[1]),\ldots,f(\tup{x_1}[m],\ldots,\tup{x_k}[m]))\in D^m\,.
\end{equation}
\begin{definition} \label{def:pol}
Let $\gamma$ be an $m$-ary weighted relation on $D$ and let $f$ be a $k$-ary operation
on $D$.
Then $f$ is a \emph{polymorphism} of $\gamma$ if,
for any $(\tup{x_1},\ldots,\tup{x_k}) \in (\Feas(\gamma))^k$,
we have $f(\tup{x_1},\ldots,\tup{x_k})\in\dom(\gamma)$.

For any constraint language $\Gamma$ over a set $D$,
we denote by $\Pol(\Gamma)$ the set of all operations on $D$ which are polymorphisms of all
$\gamma \in \Gamma$. We write $\Pol(\gamma)$ for $\Pol(\{\gamma\})$.
\end{definition}
A $k$-ary \emph{projection} is an operation of the form
$\proj^{(k)}_i(x_1,\ldots,x_k)=x_i$ for some $1\leq i\leq k$.
Projections are (trivial) polymorphisms of all constraint languages.

\begin{definition}
The \emph{superposition} of a $k$-ary operation $f:D^k\rightarrow D$ with $k$
$\ell$-ary operations $g_i:D^\ell\rightarrow D$ for $1\leq i\leq k$ is the
$\ell$-ary function $f[g_1,\ldots,g_k]:D^\ell\to D$ defined by
\begin{equation}
f[g_1,\ldots,g_k](x_1,\ldots,x_\ell)=f(g_1(x_1,\ldots,x_\ell),\ldots,g_k(x_1,\ldots,x_\ell))\,.
\end{equation}
\end{definition}

\begin{definition}\label{def:clone}
A \emph{clone} of operations, $C$, is a set of operations on $D$
that contains all projections and is closed under superposition.
The $k$-ary operations in a clone $C$ will be denoted by $C^{(k)}$.
\end{definition}

\begin{example}\label{ex:clone}
For any $D$, let $\JD$ be the set of all projections on $D$. By
Definition~\ref{def:clone}, $\JD$ is a clone.
\end{example}

It is well known that $\Pol(\Gamma)$ is a clone for all constraint
languages $\Gamma$.

\begin{definition} \label{defn:wop}
A $k$-ary \emph{weighting}
of a clone $C$ is a function $\omega : C^{(k)} \rightarrow \mathbb{Q}$ such that
$\omega(f) < 0$ only if $f$ is a projection and
\begin{equation}\label{eqZeroWeightsSum}
  \sum_{f \in C^{(k)}}\omega(f)\ =\ 0\,.
\end{equation}
We will call a function $\omega : C^{(k)} \rightarrow \mathbb{Q}$ that satisfies
\Cref{eqZeroWeightsSum} but assigns a negative weight to some operation $f
\not\in \JD^{(k)}$ an \emph{improper weighting}. In order to emphasise the
distinction we may also call a weighting a \emph{proper weighting}.

When specifying a weighting, we often write it as a weighted sum of operations
(i.e.\ $\sum \omega(f)\cdot f$) without any zero terms.
\end{definition}
\begin{definition} \label{defn:wp_trans}
For any clone $C$, any $k$-ary weighting $\omega$ of $C$, and any $g_1,\ldots,g_k \in
C^{(\ell)}$, the \emph{superposition}
of $\omega$
and $g_1,\ldots,g_k$ is the function $\omega[g_1,\ldots,g_k]: C^{(\ell)}
\rightarrow \mathbb{Q}$ defined by
\begin{equation}
\omega[g_1,\ldots,g_k](f') = \sum_{f \in C^{(k)}\:\wedge\:f[g_1,\ldots,g_k] =
f'}\omega(f)\,.
\end{equation}
By convention, the value of an empty sum is $0$.

If the result of a superposition is a proper weighting (that is, negative weights
are only assigned to projections), then that superposition will be called a
\emph{proper} superposition.
\end{definition}
\begin{definition} \label{defn:wclone}
A \emph{weighted clone}, $\Omega$, is a non-empty set of weightings of some fixed
clone $C$, called the \emph{support clone} of $\Omega$, which is closed under scaling
by non-negative rational constants, addition of weightings of equal arity, and proper superposition with
operations from $C$.
\end{definition}

We now link weightings and weighted relations by the concept of weighted
polymorphism, which will allow us to establish a correspondence between weighted
clones and weighted relational clones.
\begin{definition} \label{def:wp}
Let $\gamma$ be an $m$-ary weighted relation
on $D$ and let $\omega$ be a $k$-ary weighting of a clone $C$ of
operations on $D$.
We call $\omega$ a \emph{weighted polymorphism}
of $\gamma$ if $C\subseteq\Pol(\gamma)$ and for any
$(\tup{x}_1,\ldots,\tup{x}_k) \in (\Feas(\gamma))^k$,
we have
\begin{equation}\label{eq:WPOL}
  \sum_{f \in C^{(k)}}\omega(f)\cdot\gamma(f(\tup{x}_1,\ldots,\tup{x}_k)) \ \leq\ 0\,.
\end{equation}
If $\omega$ is a weighted polymorphism of $\gamma$, we say that $\gamma$ is \emph{improved} by $\omega$.
\end{definition}
\begin{example}
Let $D = \{0, 1\}$ with ordering $0 < 1$. Binary operations $\min$ and $\max$
return the smaller and larger of their two arguments respectively. A function
$\gamma : D^k \to \q$ is \emph{submodular} if it satisfies $\gamma(\tup{x}_1) +
\gamma(\tup{x}_2) \geq \gamma(\min(\tup{x}_1, \tup{x}_2)) +
\gamma(\max(\tup{x}_1, \tup{x}_2))$ for all $\tup{x}_1, \tup{x}_2$. Clearly,
submodular functions are improved by the binary weighting $\omega = -\e{1}{2}
-\e{2}{2} + \min + \max$.
\end{example}

\begin{definition}
For any $\Gamma \subseteq \wrel_D$, we define $\wPol(\Gamma)$ to be the set of all
weightings of $\Pol(\Gamma)$ which are weighted polymorphisms of all weighted
relations $\gamma \in \Gamma$. We write $\wPol(\gamma)$ for $\wPol(\{\gamma\})$.
\end{definition}

\begin{definition}
We denote by $\wops_C$ the set of all possible (proper) weightings of clone $C$, and
define $\wops_D$ to be the union of the sets $\wops_C$ over all clones $C$ on $D$.
\end{definition}

Any $\Omega \subseteq \wops_D$ may contain weightings of \emph{different}
clones over $D$. We can then  extend each of these weightings with zeros, as
necessary, so that they are weightings of the same clone $C$, where $C$ is the
smallest clone containing all the clones associated with weightings in $\Omega$.

\begin{definition}
We define $\wClone(\Omega)$ to be the smallest weighted clone
containing this set of extended weightings obtained from $\Omega$.
\end{definition}

For any $\Omega \subseteq \wops_D$,
we denote by $\Imp(\Omega)$ the set of all weighted relations in $\wrel_D$ which are
improved by all weightings $\omega \in \Omega$.

The main result in~\cite{cccjz13:sicomp} establishes a 1-to-1
correspondence between weighted relational clones and weighted clones.

\begin{theorem}[%\hspace*{-0.3em}
\cite{cccjz13:sicomp}]~\label{thm:wgc}
\begin{enumerate}
\item
For any finite $D$ and any finite $\Gamma \subseteq \wrel_D$, $\Imp(\wPol(\Gamma)) = \wRelClone(\Gamma)$.
\item
For any finite $D$ and any finite $\Omega\subseteq \wops_D$, $\wPol(\Imp(\Omega)) = \wClone(\Omega)$.
\end{enumerate}
\end{theorem}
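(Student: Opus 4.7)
The theorem is a Galois-style duality between finitely generated weighted relational clones and finitely generated weighted clones. Each part splits into an ``easy" inclusion established by checking closure operations one by one, and a ``hard" inclusion established by finite-dimensional LP duality (Farkas' lemma); the finiteness of $D$, $\Gamma$, and $\Omega$ is what makes the relevant separation argument a finite LP. I would treat the two parts in parallel.

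For Part~1, the inclusion $\wRelClone(\Gamma) \subseteq \Imp(\wPol(\Gamma))$ is a routine induction on the construction of $\wRelClone(\Gamma)$. Fix any $\omega \in \wPol(\Gamma)$; I would check that both generators $\phi_=$ and $\phi_\emptyset$ are improved by $\omega$ (the first because $\Pol(\Gamma)$ contains only total operations, the second vacuously), and then that the four closure operations---addition of weighted relations, minimisation, scaling by non-negative rational constants, and addition of rational constants---each preserve inequality~\eqref{eq:WPOL}. Each verification is a one-line rearrangement, using that $\Feas$ is preserved under the same operations.

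The hard direction $\Imp(\wPol(\Gamma)) \subseteq \wRelClone(\Gamma)$ is the crux, and I would argue by contrapositive using Farkas' lemma. Suppose $\gamma \in \wrel_D^{(m)} \setminus \wRelClone(\Gamma)$. Since $|D^m|$ is finite, $m$-ary members of $\wRelClone(\Gamma)$ form a convex cone in the finite-dimensional space $\q^{D^m}$. I would then write down a primal LP whose feasibility expresses ``$\gamma$ equals a non-negative rational combination of weighted relations expressible as gadgets over $\Gamma$, plus a constant'', with variables indexed by $k$-ary operations of $\Pol(\Gamma)$ applied coordinatewise to the rows of a matrix whose columns enumerate $\Feas(\gamma)$ (so $k = |\Feas(\gamma)|$). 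Because $\Gamma$ and $D^m$ are finite, this LP has finite size. Infeasibility together with Farkas' lemma yields a dual separating functional, and the choice of primal variables is rigged precisely so that the dual coefficients, after normalisation, can be read as a proper $k$-ary weighting $\omega$ of $\Pol(\Gamma)$: the weights of non-projection operations are non-negative, they sum to zero, $\omega$ improves every $\gamma' \in \Gamma$, and yet $\omega$ violates~\eqref{eq:WPOL} for $\gamma$. Thus $\omega \in \wPol(\Gamma)$ witnesses $\gamma \notin \Imp(\wPol(\Gamma))$.

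Part~2 is symmetric. The inclusion $\wClone(\Omega) \subseteq \wPol(\Imp(\Omega))$ is again a routine closure check on the three operations of \Cref{defn:wclone}. For the converse, given $\omega \notin \wClone(\Omega)$, I would set up the ``dual'' primal LP encoding that $\omega$ is a non-negative rational combination of proper superpositions of weightings from $\Omega$ (with equal arity, after padding by zeros), and by Farkas extract a finite-dimensional separating vector which can be interpreted as a weighted relation in $\Imp(\Omega)$ that $\omega$ fails to improve. The main obstacle in both parts is the same: the primal LP must be designed so that the dual infeasibility certificate automatically satisfies the sign conditions required of a \emph{proper} weighting (respectively, the $\Feas$ and $\qq$-valuedness conditions required of a weighted relation). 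This is exactly the delicate bookkeeping carried out in~\cite{cccjz13:sicomp}; the later sections of the present paper observe that, once finiteness is dropped, the LPs become infinite-dimensional and one must substitute convex-optimisation separation theorems, topological closure, and real weights to salvage the argument.
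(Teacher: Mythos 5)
Theorem~\ref{thm:wgc} is imported from~\cite{cccjz13:sicomp} and is not reproved in the present paper, so there is no in-paper proof to measure your sketch against directly; the relevant comparison is with the original argument and with the paper's own generalisation, Theorems~\ref{thGaloisGamma} and~\ref{thGaloisOmega}. Your outline does match the known strategy: the inclusions $\wRelClone(\Gamma) \subseteq \Imp(\wPol(\Gamma))$ and $\wClone(\Omega) \subseteq \wPol(\Imp(\Omega))$ are routine closure checks (the paper itself invokes this via Proposition~6.2 of~\cite{cccjz13:sicomp} in the proof of Theorem~\ref{thm:extGC}), and the hard inclusions are a separation argument via finite LP duality at the arity $k=|\Feas(\gamma)|$, with the LP engineered so the Farkas certificate is a \emph{proper} weighting (resp.\ a genuine weighted relation). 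Where the present paper differs is that it replaces Farkas' lemma with the polar-cone duality $K^{\circ\circ} = \overline{\mathrm{cone}}(K)$ in Hilbert spaces, which handles infinite generating sets at the cost of requiring topological closure, real weights, and closure under $\Opt$; your sketch correctly flags that this is exactly what motivates the new definitions. One imprecision worth fixing: the $m$-ary members of $\wRelClone(\Gamma)$ are \emph{not} a convex cone in $\q^{D^m}$ (minimisation is nonlinear, values can be $\infty$, addition of constants breaks homogeneity), so that is the wrong object to separate against. The correct cone, and what the LP must target, is the set of evaluation vectors $\gamma'[X] \in \q^{\Pol^{(k)}(\Gamma)}$ as $\gamma'$ ranges over $\Gamma$ and $X$ over $(\Feas(\gamma'))^k$, augmented with $\pm\iota$ and $-\eps_f$ to force the dual to be a weighting — this is the set $W$ in the proof of Theorem~\ref{thGaloisGamma}, and the gadget $\mu_{\gamma',X}$ is what translates between those vectors and actual members of $\wRelClone(\Gamma)$. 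Your ``rigged'' primal would need to be built around precisely this set, not around $\wRelClone(\Gamma)^{(m)}$ itself.
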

Thus, when trying to identify tractable constraint languages, it is
sufficient to consider only languages of the form $\Imp(\Omega)$ for some weighted clone $\Omega$.

\section{Results}
\label{sec:results}

First we show that Theorem~\ref{thm:wgc} can be slightly extended to certain
constraint languages and sets of weightings of infinite size.

\begin{theorem}\hfill\label{thm:extGC}
\begin{enumerate}
\item Let $\Gamma \subseteq \wrel_D$. Then $\Imp(\wPol(\Gamma)) =
\wRelClone(\Gamma)$ if and only if $\wRelClone(\Gamma) = \Imp(\Omega)$ for some
$\Omega \subseteq \wops_D$.

\item
Let $\Omega \subseteq \wops_D$. Then $\wPol(\Imp(\Omega)) = \wClone(\Omega)$ if
and only if $\wClone(\Omega) = \wPol(\Gamma)$ for some $\Gamma \subseteq
\wrel_D$.
\end{enumerate}
\end{theorem}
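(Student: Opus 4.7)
The plan is to recognize Theorem~\ref{thm:extGC} as a formal consequence of the underlying Galois-connection machinery, so that only two standard preservation lemmas remain to be checked. To set things up, I would first note that $(\wPol, \Imp)$ forms an antitone Galois connection between $\mathcal{P}(\wrel_D)$ and $\mathcal{P}(\wops_D)$: for every $\Gamma \subseteq \wrel_D$ and $\Omega \subseteq \wops_D$, the inclusions $\Omega \subseteq \wPol(\Gamma)$ and $\Gamma \subseteq \Imp(\Omega)$ both unfold to the same statement, namely ``every $\omega \in \Omega$ improves every $\gamma \in \Gamma$''. From this I immediately get mutual antitonicity of $\Imp$ and $\wPol$, the unit/counit inclusions $\Gamma \subseteq \Imp(\wPol(\Gamma))$ and $\Omega \subseteq \wPol(\Imp(\Omega))$, and the triangle identities $\Imp \circ \wPol \circ \Imp = \Imp$ and $\wPol \circ \Imp \circ \wPol = \wPol$.

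The content beyond pure Galois bookkeeping consists of two preservation statements: (a) for every $\Omega \subseteq \wops_D$, the set $\Imp(\Omega)$ is a weighted relational clone in the sense of Definition~\ref{defn:wrelclone}, and (b) for every $\Gamma \subseteq \wrel_D$, the set $\wPol(\Gamma)$ is a weighted clone in the sense of Definition~\ref{defn:wclone}. For (a) I would check one by one that each closure operation---addition, minimisation, scaling by nonnegative rationals, addition of rational constants, and inclusion of $\phi_=$ and $\phi_\emptyset$---preserves the inequality \eqref{eq:WPOL} for every $\omega \in \Omega$. For (b) the analogous check is for nonnegative rational scaling, equal-arity sums, and proper superposition with polymorphisms; non-emptiness of $\wPol(\Gamma)$ is witnessed by the zero weighting. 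The calculations are local and identical to those used in~\cite{cccjz13:sicomp} for the finite case.

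Granted (a) and (b), the theorem is a short two-inclusion argument. For part (i), the ``$\Rightarrow$'' direction is immediate by taking $\Omega \coloneqq \wPol(\Gamma)$. For ``$\Leftarrow$'', assume $\wRelClone(\Gamma) = \Imp(\Omega)$. Then $\Imp(\wPol(\Gamma))$ contains $\Gamma$ and is a weighted relational clone by (a), so $\wRelClone(\Gamma) \subseteq \Imp(\wPol(\Gamma))$. Conversely, the Galois inclusion and antitonicity of $\wPol$ on $\Gamma \subseteq \wRelClone(\Gamma)$ yield
$\Omega \subseteq \wPol(\Imp(\Omega)) = \wPol(\wRelClone(\Gamma)) \subseteq \wPol(\Gamma)$,
and applying antitone $\Imp$ gives $\Imp(\wPol(\Gamma)) \subseteq \Imp(\Omega) = \wRelClone(\Gamma)$. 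Part (ii) is proved by an entirely symmetric argument, with (b) playing the role of (a) and the roles of $\wPol,\Imp,\wRelClone,\wClone$ interchanged.

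There is no real obstacle here: the whole argument is routine Galois-connection formalism together with the preservation lemmas (a) and (b), each of which is a finitary, local verification that transfers without change from the finite setting of~\cite{cccjz13:sicomp}. The substantive difficulty---actually exhibiting an appropriate $\Omega$ or $\Gamma$ so that the hypothesis of Theorem~\ref{thm:extGC} is met for \emph{infinite} weighted (relational) clones---is not resolved by this theorem and is precisely what motivates the later sections' move to real weights and topological closure.
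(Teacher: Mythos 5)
Your proof is correct and follows essentially the same route as the paper's: the nontrivial direction comes down to the observation that $\Imp(\wPol(\Gamma))$ is a weighted relational clone containing $\Gamma$ (your preservation lemma (a), which the paper cites as Proposition~6.2 of~\cite{cccjz13:sicomp}) together with $\Omega \subseteq \wPol(\Gamma)$, which you obtain via the Galois unit and antitonicity while the paper derives it directly from $\Gamma \subseteq \Imp(\Omega)$. The additional Galois-connection bookkeeping you set up (unit/counit, triangle identities) is sound but not used beyond these two steps, and your closing remark correctly identifies that the hard part for infinite clones is not this theorem but exhibiting a suitable $\Omega$ or $\Gamma$.
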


\begin{proof}
We will only prove the first case as the second one is analogous.

Suppose that $\wRelClone(\Gamma) = \Imp(\Omega)$ for some $\Omega \subseteq
\wops_D$. As $\Gamma \subseteq \wRelClone(\Gamma)$, every weighting in $\Omega$
improves $\Gamma$, hence $\Omega \subseteq \wPol(\Gamma)$ and
$\Imp(\wPol(\Gamma)) \subseteq \Imp(\Omega) = \wRelClone(\Gamma)$. The inclusion
$\wRelClone(\Gamma) \subseteq \Imp(\wPol(\Gamma))$ follows from the fact that
$\Imp(\wPol(\Gamma))$ is a weighted relational
clone~\cite[Proposition~6.2]{cccjz13:sicomp} that contains $\Gamma$.

The converse implication holds trivially for $\Omega = \wPol(\Gamma)$.
\end{proof}

We remark that any \emph{finitely generated} weighted relational clone on a finite domain
satisfies, by Theorem~\ref{thm:wgc}\,(1), the condition of
Theorem~\ref{thm:extGC}\,(1). Similarly, any finitely generated weighted clone on a finite domain,
by Theorem~\ref{thm:wgc}\,(2), satisfies the condition of
Theorem~\ref{thm:extGC}\,(2).

However, our next result shows that Theorem~\ref{thm:wgc} does \emph{not} hold
for all infinite constraint languages and infinite sets of weightings.

\begin{theorem}\hfill\label{thm:counterQ}
\begin{enumerate}
\item There is a finite $D$ and an infinite $\Gamma \subseteq \wrel_D$ with
$\Imp(\wPol(\Gamma)) \neq \wRelClone(\Gamma)$.
\item
There is a finite $D$ and an infinite $\Omega\subseteq \wops_D$ with
$\wPol(\Imp(\Omega)) \neq \wClone(\Omega)$.
\end{enumerate}
\end{theorem}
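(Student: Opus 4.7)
My plan is to exhibit, for each part, an infinite family that converges pointwise to a target, show the target is captured by the Galois image of the dual operator via a limit argument, and argue it cannot be produced by the purely rational algebraic operations defining $\wRelClone(\cdot)$ or $\wClone(\cdot)$.

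For Part~(1) take $D=\{0,1,2\}$ and let $\gamma_n\in\wrel_D^{(1)}$ be the unary weighted relation with $\gamma_n(0)=0$, $\gamma_n(1)=1$, $\gamma_n(2)=2+1/n$. Set $\Gamma=\{\gamma_n:n\geq 1\}$ and target $\gamma^\ast=(0,1,2)$. Every $\gamma_n$ is finite-valued, so all operations on $D$ are polymorphisms of both $\gamma_n$ and $\gamma^\ast$. Each $\omega\in\wPol(\Gamma)$ has finite support, and the inequality $\sum_f\omega(f)\,\gamma_n(f(\tup{x}_1,\ldots,\tup{x}_k))\leq 0$ is a continuous finite sum in $\gamma_n$; the pointwise limit $\gamma_n\to\gamma^\ast$ then yields $\omega\in\wPol(\gamma^\ast)$, so $\gamma^\ast\in\Imp(\wPol(\Gamma))$. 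For non-membership in $\wRelClone(\Gamma)$, observe that since every generator is unary each gadget decomposes into a sum of constraints on the output variable plus a constant from minimising the auxiliary variables, whence the unary elements of $\wRelClone(\Gamma)$ are exactly $\{\sum_n c_n\gamma_n + k:c_n\in\Qnn\text{ of finite support},\,k\in\q\}$; matching this to $\gamma^\ast$ forces $k=0$, $\sum c_n=1$ and $\sum c_n/n=0$, which is inconsistent for nonnegative $c_n$.

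For Part~(2) take $D=\{0,1\}$ and the unary weightings $\omega_n = -(1+1/n)\,e + c_0 + (1/n)\,c_1$, where $e$ is the identity and $c_0,c_1$ are the unary constants with values $0$ and $1$. Set $\Omega=\{\omega_n:n\geq 1\}$ and $\omega^\ast = -e + c_0$. Any $\gamma\in\Imp(\Omega)$ has $c_0,c_1\in\Pol(\gamma)$ and satisfies $-(1+1/n)\gamma(\tup{x}) + \gamma(\vec{0}) + (1/n)\gamma(\vec{1}) \leq 0$ for every $n$; letting $n\to\infty$ gives $\gamma(\vec{0})\leq\gamma(\tup{x})$, which is exactly the improvement condition for $\omega^\ast$, so $\omega^\ast\in\wPol(\Imp(\Omega))$. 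To prove $\omega^\ast\notin\wClone(\Omega)$ I classify the proper superpositions: $\omega_n[g]$ for unary $g\in\{e,c_0,c_1\}$ is proper only when $g=e$ (reproducing $\omega_n$), and $\omega_n[h]$ for binary $h$ in the support clone is proper only when $h$ is a projection. By the associativity identity $\omega[g_1,\ldots,g_k][h_1,\ldots,h_\ell]=\omega[g_1[h_1,\ldots,h_\ell],\ldots,g_k[h_1,\ldots,h_\ell]]$, every iterated proper superposition reduces to one of these basic shapes, so each unary weighting in $\wClone(\Omega)$ is a $\Qnn$-linear combination of $\{\omega_n\}$; matching to $\omega^\ast$ again yields the impossible system $\sum c_n=1,\ \sum c_n/n=0$.

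The main obstacle is the combinatorial verification in Part~(2) that no iterated arity-raising and arity-lowering proper superposition escapes the $\Qnn$-cone generated by $\{\omega_n\}$. The associativity identity for superposition reduces this to the finite check of unary superpositions $\omega_n[g]$ and binary superpositions $\omega_n[h]$ with $h$ a projection, which is routine.
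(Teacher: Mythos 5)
Your Part (1) is correct and is a genuinely different route from the paper's. The paper builds its counterexample around an \emph{irrational} parameter $t$ and a language that is already a weighted relational clone; you instead use a rational sequence $\gamma_n$ whose pointwise limit $\gamma^\ast$ lies in the closure of the rational cone but not in the cone itself. Your reduction of the unary finite-valued part of $\wRelClone(\Gamma)$ to nonnegative rational combinations $\sum c_n\gamma_n + k$ goes through because the generators are unary, finite-valued, and all minimised at $0$, so every gadget projection decomposes as claimed; and the limit argument giving $\gamma^\ast\in\Imp(\wPol(\Gamma))$ is a finite sum, hence sound. Both arguments exploit the same phenomenon (rational cones over infinite generating sets need not be closed), but yours is more elementary.

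Part (2), however, has a genuine gap and your chosen $\Omega$ does not work: in fact $\omega^\ast\in\wClone(\Omega)$. You classify single proper superpositions $\omega_n[g]$ and conclude that every unary weighting in $\wClone(\Omega)$ is a $\Qnn$-combination of the $\omega_n$, but this misses that a weighted clone is also closed under \emph{sums of improper superpositions that are jointly proper} (this is Lemma~6.4 of \cite{cccjz13:sicomp}, the rational analogue of \Cref{lmProperSumImproperWeightings}). Concretely, with $\omega_1 = -2e + c_0 + c_1$ and the improper superposition $\omega_1[c_1] = c_0 - c_1$, one has
\[
\tfrac{1}{2}\,\omega_1 + \tfrac{1}{2}\,\omega_1[c_1]
  \;=\; \tfrac{1}{2}(-2e + c_0 + c_1) + \tfrac{1}{2}(c_0 - c_1)
  \;=\; -e + c_0 \;=\; \omega^\ast .
\]
This is a proper weighting, and since $c_1$ lies in the support clone generated by the $\omega_n$, it belongs to $\wClone(\Omega)$. (Equivalently: form the binary proper weighting $\mu' = \tfrac{1}{2}\omega_1[\e{1}{2}] + \tfrac{1}{2}\omega_1[\e{2}{2}]$ and take the single proper superposition $\mu'[e,c_1]=\omega^\ast$.) So your containment $\wPol(\Imp(\Omega))\neq\wClone(\Omega)$ is unproved because the easy direction $\omega^\ast\notin\wClone(\Omega)$ is false. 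The ``routine check'' you deferred in your last paragraph is precisely the step that fails: the cone generated by $\{\omega_n\}$ alone is not the set of unary weightings in $\wClone(\Omega)$, because the improper superpositions $\omega_n[c_0],\omega_n[c_1]$ generate the direction $c_0-c_1$ and fill in your intended gap. The paper avoids this by defining $\Omega$ directly via an irrational linear inequality closed under all the clone operations, rather than as a set of generators.
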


Our aim is to establish a Galois connection even for infinite sets of weighted
relations and weightings. As we demonstrate in the proof of \Cref{thm:counterQ},
this cannot be done when restricted to rational weights; hence we allow weighted
relations and weightings to assign \emph{real-valued} weights. To distinguish
them from their formerly defined rational-valued counterparts, we will use a
subscript/superscript $\r$.

We will show in \Cref{lmWPolIsClosedClone} that $\RwPol(\Gamma)$ is
topologically closed (in a natural topology defined later) for any set of
weighted relations $\Gamma$; analogously, in \Cref{lmImpIsClosedClone} we will
show that $\RImp(\Omega)$ is topologically closed for any set of weightings
$\Omega$. Therefore, our new definitions of weighted (relational) clones require
them to be topologically closed.

Inspired by weighted pp-definitions~\cite{Thapper10:thesis}, we extend the
notion of weighted relational clones: we require them to be closed also under
operator $\Opt$. This change is justified by \Cref{thm:OPT} in which we prove
that the inclusion of $\Opt$ preserves tractability. In order to retain the
one-to-one correspondence with weighted clones, we need to alter their
definition too: weightings now assign weights to all operations and hence are
independent of the support clone (which becomes meaningless and we discard it).

Including the $\Opt$ operator brings two advantages to the study of weighted clones.
Firstly, it slightly simplifies the structure of the space of all weighted
clones. According to the original definition, a weighted clone is determined by
its support clone and the set of weightings it consists of; by our definition a
weighted clone equals the set of its weightings. Secondly, any non-projection
polymorphism of a weighted relational clone $\Gamma$ is assigned a positive
weight by some weighted polymorphism of $\Gamma$ (see
\Cref{thSupportWPolEqPol}).

Our main result is the following theorem, which holds for our new definition of
real-valued weightings and weighted relations.

\begin{theorem}[%\hspace*{-0.3em}
Main]~\label{thm:main}
\begin{enumerate}
\item
For any finite $D$ and any $\Gamma \subseteq \Rwrel_D$, $\RImp(\RwPol(\Gamma)) =
\RwRelClone(\Gamma)$.
\item
For any finite $D$ and any $\Omega \subseteq \Rwops_D$, $\RwPol(\RImp(\Omega)) =
\RwClone(\Omega)$.
\end{enumerate}
\end{theorem}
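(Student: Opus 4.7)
The plan is to prove each inclusion of both parts separately. The ``easy'' inclusions $\RwRelClone(\Gamma) \subseteq \RImp(\RwPol(\Gamma))$ and $\RwClone(\Omega) \subseteq \RwPol(\RImp(\Omega))$ reduce to verifying that $\RImp(\RwPol(\Gamma))$ is itself a weighted relational clone (in the new sense) containing $\Gamma$, and dually that $\RwPol(\RImp(\Omega))$ is a weighted clone containing $\Omega$. Concretely, for Part 1 I would check that $\RImp(\RwPol(\Gamma))$ contains the binary equality and the unary empty relation and is closed under addition, minimisation, non-negative real scaling, addition of real constants, $\Opt$, and topological closure. The first five are routine adaptations of the arguments in~\cite{cccjz13:sicomp}; closure under $\Opt$ uses that if $\omega$ improves $\gamma$ then the polymorphisms in the support of $\omega$ also preserve $\Opt(\gamma)$; and topological closure is precisely Lemma~\ref{lmImpIsClosedClone}. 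Part 2 is symmetric, using Lemma~\ref{lmWPolIsClosedClone}.

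The hard inclusion for Part 1 is $\RImp(\RwPol(\Gamma)) \subseteq \RwRelClone(\Gamma)$, which I would prove by contrapositive: given $\gamma \notin \RwRelClone(\Gamma)$ of arity $m$, construct $\omega \in \RwPol(\Gamma)$ failing to improve $\gamma$. Setting $R = \Feas(\gamma)$, the restrictions to $R$ of all $m$-ary members of $\RwRelClone(\Gamma)$ whose feasibility contains $R$ form a convex cone $K$ in the finite-dimensional real vector space $\r^R$. The key role of the new topological closure requirement is to guarantee that $K$ is closed in $\r^R$: without it, the convex hull of restrictions of an infinite generating set need not be closed, which is exactly the mechanism behind the counterexamples in Theorem~\ref{thm:counterQ}. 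Since $\gamma|_R \notin K$, the Hahn--Banach separation theorem for closed convex cones supplies a vector $\mu \in \r^R$ satisfying $\sum_{\tup{x}\in R} \mu_{\tup{x}}\gamma(\tup{x}) > 0$ and $\sum_{\tup{x}\in R} \mu_{\tup{x}}\eta(\tup{x}) \leq 0$ for every $\eta \in K$.

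To convert $\mu$ into the required weighting $\omega$, I would employ the canonical-gadget construction from~\cite{cccjz13:sicomp}: enumerate the tuples of $R$ as $\tup{y}_1, \ldots, \tup{y}_n$, view them as the rows of an $n \times m$ matrix over $D$, and to each $n$-ary polymorphism $f \in \Pol(\Gamma)$ associate the image tuple $\tup{z}_f = f(\tup{y}_1, \ldots, \tup{y}_n) \in D^m$. Pulling $\mu$ back through this correspondence, and then normalising to enforce the zero-sum and sign-on-projections conditions of Definition~\ref{defn:wop}, produces $\omega$. The second inequality from the separation, applied to the $m$-ary gadget restrictions of each $\gamma' \in \Gamma$ (which lie in $K$), translates precisely to $\sum_f \omega(f)\gamma'(f(\tup{y}_1,\ldots,\tup{y}_n)) \leq 0$, so $\omega \in \RwPol(\Gamma)$; the strict first inequality shows that $\omega$ does not improve $\gamma$ at $(\tup{y}_1,\ldots,\tup{y}_n)$. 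Part 2's hard inclusion is symmetric, carrying out the separation in the finite-dimensional space of $k$-ary weightings rather than $m$-ary weighted relations.

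The main obstacle will be in rigorously executing the duality correspondence. In particular, I expect non-trivial work in ensuring that the lifting of $\mu$ yields a \emph{proper} weighting (with negative weights supported only on projections, which will rely on having subtracted the correct projection-weighted functional before lifting), that its support clone sits inside $\Pol(\Gamma)$ rather than only in $\Pol(\gamma)$, and that the $\Opt$-closure of $\RwRelClone(\Gamma)$ is genuinely used to align the feasibility of candidate members of $K$ with $R$, so that the separation argument takes place in the intended ambient space.
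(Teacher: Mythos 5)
Your high-level strategy is right in spirit---the hard inclusion does come down to a closed-convex-cone separation (equivalently, a bipolar-cone argument), the topological closure is what makes the cone closed, and $\Opt$ is what lets you shrink feasibility sets---and your treatment of the easy inclusions via \Cref{lmWPolIsClosedClone} and \Cref{lmImpIsClosedClone} matches the paper. But the ambient space in which you run the separation is wrong, and this is not a cosmetic issue. You propose to separate $\gamma|_R$ from $K\subseteq\r^R$. However, the improvement inequality $\sum_f\omega(f)\,\gamma(f(\tup{y}_1,\dots,\tup{y}_n))\le 0$ depends on the values of $\gamma$ on the set $\{f(\tup{y}_1,\dots,\tup{y}_n)\mid f\in\Pol^{(n)}(\Gamma)\}$, which in general strictly contains $R$ (it equals $R$ only for operations $f$ that happen to preserve $\Feas(\gamma)$). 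Consequently a weighting $\omega$ does \emph{not} induce a linear functional on $\r^R$, and a separating hyperplane $\mu\in\r^R$ does not ``pull back'' to a weighting: $\mu(\tup{z}_f)$ is simply undefined for the many $f$ with $\tup{z}_f\notin R$, and nothing in your construction controls the sign of the lifted values on non-projections. Relatedly, $K$ being closed in $\r^R$ does not follow from $\RwRelClone(\Gamma)$ being topologically closed: the restriction map from the Hilbert space of $m$-ary weighted relations with $\Feas\supsetneq R$ to $\r^R$ has a nontrivial kernel, and continuous linear images of closed convex cones are not closed in general.

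The paper sidesteps all of this by working from the start in the Hilbert space $H$ of functions $\Pol^{(k)}(\Gamma)\to\r$ with $k=|\Feas(\rho)|$, and by carefully engineering the cone $W\subseteq H$ so that $W^\circ$ is exactly the set of $k$-ary weighted polymorphisms. The vectors $\iota,-\iota$ force the zero-sum condition and the vectors $-\eps_f$ force non-negativity on non-projections; with these included, the bipolar theorem gives $W^{\circ\circ}$ as the \emph{closed} convex cone generated by $W$ without any extra closedness argument, and it is here that topological closure of $\RwRelClone(\Gamma)$ is invoked---to translate membership in $V^{\circ\circ}$ back into membership of the corresponding $m$-ary weighted relation ($m=|D|^k$) in $\RwRelClone(\Gamma)$. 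The role of $\Opt$ is also more concrete than ``aligning feasibility'': the paper builds an auxiliary vector $\alpha_0\in V^{\circ\circ}$ that is zero on the projections and strictly positive on the problematic operations $Q$, projects its weighted relation down to $\rho_0'$, and uses $\Opt(\rho_0')$ to obtain a relation whose feasibility is exactly $\Feas(\rho)$, finally writing $\rho=\rho'+\Opt(\rho_0')$. To repair your proposal you would essentially have to reconstruct this machinery: move the separation from $\r^R$ to $H$, add the auxiliary generators so the polar cone is precisely the weighted polymorphisms, and carry out the $\Opt$ construction explicitly.
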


Finally, we show that taking the weighted relational clone of a constraint
language preserves solvability with an absolute error bounded by $\epsilon$ (for
any $\epsilon > 0$), and demonstrate certain difficulties with proving that it
preserves exact solvability.

\section{Proof of Theorem~\ref{thm:counterQ}}
\label{sec:counterQ}

In this section we will prove Theorem~\ref{thm:counterQ}, which we state here as
two lemmas.

\begin{lemma}
\label{lmGammaCounterQ}
There is a finite $D$ and an infinite $\Gamma \subseteq \wrel_D$ with
$\Imp(\wPol(\Gamma)) \neq \wRelClone(\Gamma)$.
\end{lemma}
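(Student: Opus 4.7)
The plan is to exhibit a concrete infinite family $\Gamma \subseteq \wrel_D$ and a target weighted relation $\gamma^*$ separating $\wRelClone(\Gamma)$ from $\Imp(\wPol(\Gamma))$. I take $D = \{0, 1, 2\}$ and, for each integer $n \geq 1$, the unary weighted relation $\gamma_n$ given by $\gamma_n(0) = 0$, $\gamma_n(1) = 1$, $\gamma_n(2) = 1 + \tfrac{1}{n}$. Set $\Gamma = \{\gamma_n : n \geq 1\}$ and let $\gamma^*$ be the unary weighted relation with $\gamma^*(0) = 0$ and $\gamma^*(1) = \gamma^*(2) = 1$, so that $\gamma_n \to \gamma^*$ pointwise.

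For the inclusion $\gamma^* \in \Imp(\wPol(\Gamma))$ I plan a direct limit argument. Since $\Feas(\gamma_n) = \Feas(\gamma^*) = D$ for every $n$, both $\Pol(\Gamma)$ and $\Pol(\gamma^*)$ coincide with the clone of all operations on $D$, so polymorphism compatibility is automatic. For any $\omega \in \wPol(\Gamma)$, any arity $k$, and any tuples $\tup{x}_1, \dots, \tup{x}_k \in D$, the inequality $\sum_f \omega(f) \gamma_n(f(\tup{x}_1, \dots, \tup{x}_k)) \leq 0$ is a finite sum, since the set of $k$-ary operations on the finite $D$ is finite. Passing to the limit $n \to \infty$ and using pointwise convergence of $\gamma_n$ to $\gamma^*$ preserves the inequality, so $\omega$ improves $\gamma^*$.

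The heart of the proof is showing $\gamma^* \notin \wRelClone(\Gamma)$. By the equivalent gadget-based characterisation of $\wRelClone$ stated after \Cref{defn:wrelclone}, every finite-valued unary $\gamma \in \wRelClone(\Gamma)$ is of the form $c \, \pi_{(v)}(I) + d$ for some $c \in \Qnn$, $d \in \q$, and $I \in \VCSP(\Gamma \cup \{\phi_=\})$ — any occurrence of $\phi_\emptyset$ would force $\pi_{(v)}(I) \equiv \infty$. After collapsing variables linked by $\phi_=$, the instance $I$ reduces to a sum $\sum_i \gamma_{n_i}(v_{p_i})$ of purely unary constraints. For each variable $u \neq v$, the contribution $\sum_{j:\, v_{p_j} = u} \gamma_{n_j}(s(u))$ is minimised at $s(u) = 0$ with value $0$, and therefore $\pi_{(v)}(I)(x) = \sum_{i \in K} \gamma_{n_i}(x)$, where $K$ is the finite multiset of constraint indices whose scope is $v$. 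Consequently $\gamma(2) - \gamma(1) = c \sum_{i \in K} \tfrac{1}{n_i}$, which is strictly positive unless $c = 0$ or $K = \emptyset$; in both exceptional cases $\gamma$ is constant. Since $\gamma^*(2) - \gamma^*(1) = 0$ while $\gamma^*$ is non-constant, $\gamma^* \notin \wRelClone(\Gamma)$.

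The main obstacle is making the structural characterisation above fully rigorous: I must check carefully that no combination of $\phi_=$, $\phi_\emptyset$, nested gadgets, minimisation, or intermediate rational scaling can produce a non-constant finite-valued unary weighted relation outside the claimed form $c \sum_{i \in K} \gamma_{n_i} + d$. The three-element domain, with the extra label $2$ on which each $\gamma_n$ exceeds $\gamma_n(1)$ by only the vanishing amount $\tfrac{1}{n}$, is precisely what lets the rational cone accumulate only a strictly positive slack $\gamma(2) - \gamma(1)$ — a constraint that the limit $\gamma^*$ is specifically crafted to violate.
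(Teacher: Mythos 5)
Your proof is correct modulo the structural claim you flag, and it takes a genuinely different route from the paper. The paper builds its counterexample $\Gamma$ around a positive irrational threshold $t$: it directly defines $\Gamma$ as a weighted relational clone (a set of weighted relations decomposable as sums of equalities and unary weighted relations $\rho$ with $\rho(2)-\rho(0)\geq(1+t)(\rho(1)-\rho(0))$), so the ``$\rho\notin\wRelClone(\Gamma)$'' half is immediate; the work lies in showing that every $\omega\in\wPol(\Gamma)$ has level sums $s_0,s_1,s_2$ all zero, because a rational $s_0$ cannot equal the irrational $t\cdot s_2$ unless both vanish. You instead take a concrete sequence $\gamma_n=(0,1,1+\tfrac1n)$ converging to $\gamma^*=(0,1,1)$ and argue on both sides: the $\Imp(\wPol(\Gamma))$ side is a one-line finite-dimensional limiting argument, while the $\wRelClone(\Gamma)$ side requires unwinding gadgets. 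Your route avoids the artificial irrational and makes the topological nature of the obstruction explicit (it is precisely the failure of $\wRelClone(\Gamma)$ to be topologically closed that the paper later repairs by moving to $\RwRelClone$), whereas the paper's route sidesteps the expressibility analysis entirely by making $\Gamma=\wRelClone(\Gamma)$ from the outset. The one piece you should make airtight is the claim that every finite-valued $\gamma\in\wRelClone(\Gamma)$ has the form $c\cdot\pi_L(I)+d$ with $I\in\VCSP(\Gamma\cup\{\phi_\emptyset\})$, $c\in\Qnn$, $d\in\q$; this follows by showing that set is itself a weighted relational clone containing $\Gamma$ (combine gadgets, absorb nested minimisation, clear denominators so that rational scalings $c_i$ become multiplicities of constraints, and push additive constants outward), essentially as in \cite[Proposition~4.5]{cccjz13:sicomp}. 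Once that is in place, your unary collapse $\pi_{(v)}(I)(x)=\sum_{i\in K}\gamma_{n_i}(x)$ and the observation that $\gamma^*(2)-\gamma^*(1)=0$ forces $c=0$ or $K=\emptyset$, both yielding a constant, finishes the job.
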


\begin{proof}
We set the domain to be $D = \{0, 1, 2\}$ and choose a positive \emph{irrational}
number $t$. Let $U \subseteq \wrel_D^{(1)}$ be the set of unary weighted
relations $\rho$ such that
\begin{equation}
\rho(2)-\rho(0) \geq (1+t) \cdot (\rho(1)-\rho(0))
\end{equation}
holds whenever $\rho(0), \rho(1), \rho(2)$ are all finite. It is easy to show
that $U$ is closed under addition, scaling by non-negative rational constants,
and addition of rational constants.

For any rational $u < t$, we define a unary weighted relation $\mu_u^- \in U$
such that $\mu_u^-(0) = 0$, $\mu_u^-(1) = -1$, and $\mu_u^-(2) = -1-u$. For any
rational $v > t$, we define a unary weighted relation $\mu_v^+ \in U$ such that
$\mu_v^+(0) = 0$, $\mu_v^+(1) = 1$, and $\mu_v^+(2) = 1+v$. It is easy to verify
that these weighted relations belong to $U$. Set $U$ also contains any unary
(unweighted) relation.

Let us define $\Gamma \subseteq \wrel_D$ as the set of weighted relations
$\gamma$ that can be written as
\begin{equation}
\gamma(x_1, \dots, x_r) =
\sum_{i=1}^r \rho_i(x_i) + \sum_{(i,j)\in S} \phi_=(x_i,x_j) \,,
\end{equation}
where $r$ equals the arity of $\gamma$, $\rho_i \in U$ for all $i$, $\phi_=$ is
the binary equality relation, and $S$ is an equivalence relation on $\{1, \dots,
r\}$. We claim that $\Gamma$ is a weighted relational clone. It certainly
contains $\phi_=$ and $\phi_\emptyset$, and is closed under addition, scaling by non-negative
rational constants, and addition of rational constants (as set $U$ is closed
under these operations). It is also closed under minimisation. Without loss of
generality, let us assume we minimise an $r$-ary weighted relation $\gamma$ ($r \geq 2$) over the
last variable ($x_r$). If the equivalence class of $r$ in $S$ is a singleton, we
simply add the value of $\min_{x_r\in D} \rho_r(x_r)$ to (say) $\rho_1$.
Otherwise, we can pick any $i \neq r$ such that $(i, r) \in S$ and replace
weighted relation $\rho_i$ with $\rho_i + \rho_r$.

We want to determine which weightings improve $\Gamma$. Let $\omega \in
\wPol^{(k)}(\Gamma)$ be a $k$-ary weighting and $\tup{x} \in D^k$. For any $a
\in D$, we will denote by $s_a$ the sum of weights $\omega(f)$ of all operations
$f$ such that $f(\tup{x}) = a$. Note that $s_0 + s_1 + s_2 = 0$. For any
rational $v > t$, weighting $\omega$ improves $\mu_v^+ \in \Gamma$, so we get
$s_1 + (1+v) \cdot s_2 \leq 0$ and therefore $v \cdot s_2 \leq s_0$. Similarly,
for any rational $u < t$, weighting $\omega$ improves $\mu_u^- \in \Gamma$, and
therefore $s_0 \leq u \cdot s_2$. We can choose both $u$ and $v$ arbitrarily
close to $t$, so it must hold $s_0 = t \cdot s_2$. However, $s_0$ and $s_2$ are
rational while $t$ is not. Therefore, we must have $s_0 = s_1 = s_2 = 0$ for any
weighting $\omega \in \wPol^{(k)}(\Gamma)$ and any $\tup{x} \in D^k$.

Now, let us consider the unary weighted relation $\rho$ defined as $\rho(0) = 0$
and $\rho(1) = \rho(2) = 1$. It follows from the previous paragraph that any
weighting $\omega \in \wPol(\Gamma)$ improves $\rho$, i.e.\ $\rho
\in \Imp(\wPol(\Gamma))$. However, $\rho \not\in \Gamma = \wRelClone(\Gamma)$,
so we get $\Imp(\wPol(\Gamma)) \neq \wRelClone(\Gamma)$.
\end{proof}

\begin{lemma}
\label{lmOmegaCounterQ}
There is a finite $D$ and an infinite $\Omega \subseteq \wops_D$ with
$\wPol(\Imp(\Omega)) \neq \wClone(\Omega)$.
\end{lemma}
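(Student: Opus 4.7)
The plan is to dualise the construction of Lemma~\ref{lmGammaCounterQ}. I would work with the same domain $D=\{0,1,2\}$ and positive irrational number $t$, and construct an infinite family $\Omega\subseteq\wops_{D}$ of rational weightings indexed by rational approximations $u<t<v$, directly analogous to the family $\{\mu_{u}^{-},\mu_{v}^{+}\}$ there. The goal is to exhibit a specific rational weighting $\omega^{\ast}\in\wPol(\Imp(\Omega))\setminus\wClone(\Omega)$, playing the role dual to the unary weighted relation $\rho$ constructed in Lemma~\ref{lmGammaCounterQ}.

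First I would define a ``shape'' class $V\subseteq\wops_{D}$ consisting of all (proper) rational weightings whose weights satisfy an irrational linear inequality of the form $L(\omega)\geq t\cdot L'(\omega)$, for suitable rational linear functionals $L,L'$, and choose $\Omega\subseteq V$ so that the family pins the inequality down via two-sided rational approximations of $t$. The crucial step is to verify that $V$ is closed under the three weighted-clone operations---non-negative rational scaling, addition of same-arity weightings, and proper superposition with operations of the full clone on $D$. Scaling and addition are immediate by linearity of $L,L'$, mirroring the corresponding steps for $U$ in Lemma~\ref{lmGammaCounterQ}. Closure under proper superposition is the delicate step, analogous to closure under minimisation there: superposition with non-projection operations (especially constants) can redistribute weight between constants and projections, and $L\geq tL'$ must be chosen so as to survive this redistribution. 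Once $V$ is established as a weighted clone, $\wClone(\Omega)\subseteq V$ follows.

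Next I would characterise $\Imp(\Omega)$ by density: as $u\nearrow t$ and $v\searrow t$ along the rationals, the inequalities imposed on $\gamma$ by $\omega_{u}^{-}$ and $\omega_{v}^{+}$ collapse to a single ``limit'' inequality with coefficient $t$; since $\gamma$ is rational-valued while $t$ is irrational, this forces a finite list of equalities on $\gamma$, just as $s_{0}=s_{1}=s_{2}=0$ was forced in Lemma~\ref{lmGammaCounterQ}. Finally I would exhibit a specific rational weighting $\omega^{\ast}$ whose weights violate $L(\omega^{\ast})\geq t\cdot L'(\omega^{\ast})$ strictly---which, since $\omega^{\ast}$ is rational, amounts to a rational strict inequality in the opposite direction---so $\omega^{\ast}\notin V\supseteq\wClone(\Omega)$, and verify directly that $\omega^{\ast}$ satisfies the weighted-polymorphism inequality against every $\gamma$ fulfilling the collapsed equalities, giving $\omega^{\ast}\in\wPol(\Imp(\Omega))$.

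The main obstacle is closure of $V$ under proper superposition at arbitrary arity, together with the matching non-membership of $\omega^{\ast}$. Weighted clones are closed under proper superposition with \emph{any} operation of the underlying clone, and this can redistribute weight in surprising ways: a binary weighting supported on projections and constants can superpose with a constant to yield an essentially different unary weighting once cancellations between constant- and projection-weights take place. The invariant $L\geq tL'$ must therefore be chosen so as to be invariant under this redistribution at every arity---for instance, as a functional of the multiset of weight values rather than of the weight--operation pairing---and the non-membership of $\omega^{\ast}$ must be argued against the entire closure, not merely rational combinations of elements of $\Omega$.
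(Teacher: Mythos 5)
Your strategy is exactly the paper's: work over $D=\{0,1,2\}$ with an irrational $t$, let $\Omega$ be contained in a ``shape class'' defined by a $t$-weighted linear inequality on weightings, use rational two-sided approximations of $t$ to force equalities on any $\gamma\in\Imp(\Omega)$ (so that $\Imp(\Omega)$ consists only of relations), and then exhibit a rational weighting that improves every relation but violates the irrational inequality. The substance, however, is precisely in the step you flag as ``the main obstacle'' and leave unresolved: which invariant $L\geq tL'$ survives proper superposition, and over which support clone. Without that, the argument does not close.

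The paper's resolution is clean and worth spelling out. It takes the support clone $C=\{f:f(0,\dots,0)=0\}$ and defines $\Omega$ to be the \emph{entire} shape class: a $k$-ary weighting $\omega$ of $C$ belongs to $\Omega$ iff for every $\tup{x}\in D^k$,
\[
  t\cdot\sum_{f(\tup{x})=2}\omega(f)\ \leq\ \sum_{f(\tup{x})=0}\omega(f)\,.
\]
The point is that the constraint is indexed by input tuples $\tup{x}$, not by any fixed functional on the multiset of weight values. Superposition then transports one indexed constraint to another: for $\ell$-ary $g_1,\dots,g_k\in C$ and any $\tup{x}\in D^\ell$, one has $\sum_{f(\tup{x})=a}\omega[g_1,\dots,g_k](f)=\sum_{f(\tup{y})=a}\omega(f)$ where $\tup{y}=(g_1(\tup{x}),\dots,g_k(\tup{x}))$, so the constraint at $\tup{y}$ (arity $k$) becomes the constraint at $\tup{x}$ (arity $\ell$). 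Your proposed ``functional of the multiset of weight values'' would in general \emph{not} be preserved, because superposition does not respect the multiset of weights but rather redistributes them according to how operations compose; indexing by evaluation points is what makes the inequality covariant. Taking $\Omega$ to be the full shape class (which is still countably infinite) also spares you the separate containment $\wClone(\Omega)\subseteq V$. Finally, the separating witness is simply $\omega=-\e{1}{1}+g$, where $g(0)=0$, $g(1)=g(2)=2$: it improves every relation but violates the constraint at $\tup{x}=(1)$ since $t\cdot 1\not\leq 0$. So the proposal identifies the right plan but has a genuine gap at the load-bearing step; the fix is to index the inequality by evaluation points and to choose the support clone $\{f:f(\tup{0})=0\}$.
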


\begin{proof}
We set the domain to be $D = \{0, 1, 2\}$ and choose a positive \emph{irrational}
number $t$. Let $C$ be the set of all operations $f$ such that $f(0,\dots,0) =
0$.  Clearly, $C$ contains all projections and is closed under superposition;
hence it is a clone. Let us define a set of weightings $\Omega \subseteq
\wops_D$ of the support clone $C$. For any arity $k \geq 1$, $\Omega^{(k)}$
consists of weightings $\omega$ such that for all $\mathbf{x} \in D^k$,
\begin{equation}
t\cdot\sum_{f(\mathbf{x})=2} \omega(f) \leq \sum_{f(\mathbf{x})=0} \omega(f)\,.
\label{eqRationalOmegaDefinition}
\end{equation}
It is easy to check that $\Omega$ is closed
under addition of weightings and non-negative scaling. To show that it is also
closed under superposition, let us consider any sequence of $\ell$-ary
operations $g_1, \dots, g_k$ and $\mathbf{x} \in D^\ell$. For any $a
\in D$ we have
\begin{equation}
\sum_{f(\mathbf{x})=a} \omega[g_1,\dots,g_k](f) =
\sum_{f[g_1,\dots,g_k](\mathbf{x})=a} \omega(f) =
\sum_{f(\mathbf{y})=a} \omega(f) \,,
\end{equation}
where $\mathbf{y} = (g_1(\mathbf{x}), \dots, g_k(\mathbf{x}))$. As $\omega$
satisfies Inequality~\eqref{eqRationalOmegaDefinition} for vector $\mathbf{y}$,
the superposition $\omega[g_1,\dots,g_k]$ satisfies it for vector $\mathbf{x}$.
Therefore, $\Omega$ is a weighted clone.

Let us denote by $c_0$ the unary constant zero operation, by $f,g$ the unary
operations and by $h$ the binary operation such that
\begin{equation}
\begin{aligned}
f(x) =
\begin{cases}
  0 & \text{for } x = 0 \\
  0 & \text{for } x = 1 \\
  2 & \text{for } x = 2
\end{cases}
\end{aligned} \,,
\qquad\qquad
\begin{aligned}
g(x) =
\begin{cases}
  0 & \text{for } x = 0 \\
  2 & \text{for } x = 1 \\
  2 & \text{for } x = 2
\end{cases}
\end{aligned} \,,
\qquad\qquad
\begin{aligned}
h(x,y) =
\begin{cases}
  1 & \text{for } x = 0 \wedge y = 2 \\
  2 & \text{for } x = 2 \wedge y = 2 \\
  0 & \text{otherwise}
\end{cases}
\end{aligned} \,.
\end{equation}
We denote by $\omega_0$ the unary weighting $-\e{1}{1} + c_0$. For any rational
$v > t$, we define a unary weighting $\mu_v^{(1)} = -(1+v) \cdot \e{1}{1} + v
\cdot f + g$. For any positive rational $u < t$, we define a binary weighting
$\mu_u^{(2)} = -u \cdot \e{1}{2} - \e{2}{2} + (1+u) \cdot h$. It is easy to show
that all these weightings belong to $\Omega$.

We will show that all weighted relations improved by $\Omega$ are relations,
i.e.\ $\Imp(\Omega) \subseteq \rel_D$. Suppose, to the contrary, that there is
an $r$-ary weighted relation $\gamma \in \Imp(\Omega)$ that is not a relation.
First, we obtain from it a ternary weighted relation with the same property.
Weighting $\omega_0$ improves $\gamma$, so we have $\gamma(\tup{0}) \leq
\gamma(\tup{x})$ for all $\tup{x} \in \Feas(\gamma)$, where $\tup{0} =
c_0(\tup{x})$ is the zero $r$-tuple. As $\gamma$ is not a relation, there must
be an $r$-tuple $\tup{z} = (z_1, \dots, z_r) \in \Feas(\gamma)$ for which
$\gamma(\tup{0}) < \gamma(\tup{z})$. Let us define a ternary weighted relation
$\rho$ so that $\rho(x_0, x_1, x_2) = \gamma(x_{z_1}, \dots, x_{z_r})$. It holds
that $\rho(0,0,0) = \gamma(\tup{0})$ and $\rho(0,1,2) = \gamma(\tup{z})$, so
$\rho(0,0,0) < \rho(0,1,2) < \infty$. Moreover, $\rho \in \Imp(\Omega)$.

For any rational $v > t$, weighting $\mu_v^{(1)}$ improves $\rho$. As $(0,1,2)
\in \Feas(\rho)$, we also have $(f(0),f(1),f(2)) = (0,0,2) \in \Feas(\rho)$,
$(g(0),g(1),g(2)) = (0,2,2) \in \Feas(\rho)$, and the inequality
\begin{equation}
\rho(0,2,2) - \rho(0,1,2) \leq v \cdot (\rho(0,1,2) - \rho(0,0,2)) \,.
\end{equation}
For any positive rational $u < t$, weighting $\mu_u^{(2)}$ improves $\rho$. As
$(0,0,2), (0,2,2) \in \Feas(\rho)$, we get
\begin{equation}
\rho(0,2,2) - \rho(0,1,2) \geq u \cdot (\rho(0,1,2) -\rho(0,0,2)) \,.
\end{equation}
We can choose both $u$ and $v$ arbitrarily close to $t$, so it must hold
\begin{equation}
\rho(0,2,2) - \rho(0,1,2) = t \cdot (\rho(0,1,2) - \rho(0,0,2)) \,.
\end{equation}
However, weights assigned by $\rho$ are rational while $t$ is not. Therefore,
$\rho(0,0,2) = \rho(0,1,2) = \rho(0,2,2)$. Similarly, by applying weightings
$\mu_u^{(2)}$ to $(0,0,0), (0,0,2)$ and weightings $\mu_v^{(1)}$ to $(0,0,1)$ we
obtain $\rho(0,0,0) = \rho(0,0,1) = \rho(0,0,2)$, which contradicts $\rho(0,0,0)
< \rho(0,1,2)$. Therefore, $\Imp(\Omega)$ contains only (unweighted) relations.

Now, let us consider the unary weighting $\omega = -\e{1}{1} + g$. Although it
does not belong to $\Omega = \wClone(\Omega)$ (it violates Inequality
\eqref{eqRationalOmegaDefinition} for $\tup{x} = (1)$), $\omega$ certainly
improves any $\gamma \in \Imp(\Omega)$. Therefore, $\wPol(\Imp(\Omega)) \neq
\wClone(\Omega)$.
\end{proof}

\section{New Galois Connection}
\label{sec:newGC}

In this section we will prove our main results. In Section~\ref{sub:prelim}, we
will describe the differences between the previous definitions of weighted
(relational) clones (as they were defined in~\cite{cccjz13:sicomp} and presented
in Section~\ref{sec:prelim} and the first part of Section~\ref{sec:results}) and
our new definitions. Section~\ref{sub:main} proves the main result, which
establishes a 1-to-1 correspondence between weighted relational clones and
weighted clones. Finally, Section~\ref{sub:complexity} is devoted to
computational-complexity consequences of our results.

\subsection{Preliminaries}
\label{sub:prelim}

Let $\rr=\mathbb{R}\cup\{\infty\}$ denote the set of real numbers with
(positive) infinity. We will allow weights in relations and weighted relations, as
defined in Definition~\ref{def:rel} and~\ref{def:wrel} respectively, to be real
numbers. In other words, an $m$-ary weighted relation $\gamma$ on $D$ is a
mapping $\gamma:D^m\to\rr$. We will add a subscript/superscript $\r$ to the
notation introduced in \Cref{sec:prelim} in order to emphasise the use of real
weights.

For any fixed arity $m$ and any $F \subseteq D^m$, consider the set of all
$m$-ary weighted relations $\gamma \in \Rwrel_D$ with $\Feas(\gamma) = F$. Let
us denote this set by $H$ and equip it with the inner product defined as
\begin{equation}
\inn{\alpha}{\beta} = \sum_{\tup{x}\in F} \alpha(\tup{x})\cdot\beta(\tup{x})
\end{equation}
for any $\alpha, \beta \in H$; $H$ is then a real Hilbert space. Set $\Rwrel_D$
is a disjoint union of such Hilbert spaces for all $m$ and $F$, and therefore a
topological space with the disjoint union topology induced by inner products on
the underlying Hilbert spaces. When we say a set of weighted relations is
open/closed, we will be referring to this topology.

\begin{definition}
\label{defn:wrelcloneOPT}
A constraint language $\Gamma \subseteq \Rwrel_D$ is called a \emph{weighted
relational clone} if it contains the binary equality relation $\phi_=$ and the
unary empty relation $\phi_\emptyset$, is closed under addition, minimisation,
scaling by non-negative real constants, addition of real constants, and operator
$\Opt$, and is topologically closed.

For any $\Gamma$, we define $\RwRelClone(\Gamma)$ to be the smallest weighted
relational clone containing $\Gamma$.
\end{definition}

As opposed to \Cref{defn:wrelclone}, our new definition requires weighted
relational clones to be closed under operator $\Opt$. In order to establish a
Galois connection now, we need to make an adjustment to the definition of
weighted clone too. We will discard the explicit underlying support clone;
instead, ($k$-ary) weightings will assign weights to all ($k$-ary) operations.
The role of the support clone of a weighted clone $\Omega$ is then taken over by
$\supp(\Omega)$ (see \Cref{lmSupportEqPolImp}).

We denote by $\ops_D^{(k)}$ the set of all $k$-ary operations on $D$ and let
$\ops_D=\bigcup_{k\geq 0} \ops_D^{(k)}$.

\begin{definition} \label{defn:wopNEW}
A $k$-ary \emph{weighting}
is a function $\omega : \ops_D^{(k)} \rightarrow \mathbb{R}$ such that
$\omega(f) < 0$ only if $f$ is a projection and
\begin{equation}\label{eqRZeroWeightsSum}
  \sum_{f \in \ops_D^{(k)}}\omega(f)\ =\ 0\,.
\end{equation}
We define $\supp(\omega)$ as
\begin{equation}
\supp(\omega) =
\JD^{(k)} \cup \left\{ f\in \ops_D^{(k)} \:\middle|\: \omega(f)>0 \right\} \,.
\end{equation}

We will call a function $\omega : \ops_D^{(k)} \rightarrow \mathbb{R}$ that
satisfies \Cref{eqRZeroWeightsSum} but assigns a negative weight to some
operation $f \not\in \JD^{(k)}$ an \emph{improper weighting}. In order to
emphasise the distinction we may also call a weighting a \emph{proper
weighting}.
\end{definition}

We denote by $\Rwops_D$ the set of all weightings on domain $D$. For any fixed
arity $k$, consider the set $H$ of all functions $\ops_D^{(k)} \to \r$
equipped with the
inner product defined as
\begin{equation}\label{eqWopsInnerProduct}
\inn{\alpha}{\beta} = \sum_{f\in\ops_D^{(k)}} \alpha(f)\cdot\beta(f)
\end{equation}
for any $\alpha, \beta \in H$; $H$ is then a real Hilbert space. Set $\Rwops_D$
lies in the disjoint union of such Hilbert spaces for all $k$, which is a
topological space with the disjoint union topology induced by inner products on
the underlying Hilbert spaces. When we say a set of weightings is open/closed,
we will be referring to this topology. Clearly, any closure point of a set of
weightings is itself a weighting.

\begin{definition} \label{defn:wcloneNEW}
Let $\Omega$ be a non-empty set of weightings on a fixed domain $D$. We define
$\supp(\Omega)=\JD\cup\bigcup_{\omega\in \Omega}\supp(\omega)$.

We call $\Omega$ a \emph{weighted clone} if it is closed under scaling by
non-negative real constants, addition of weightings of equal arity, and proper
superposition with operations from $\supp(\Omega)$, and is topologically closed.
\end{definition}

It is often convenient to build a desired proper weighting by taking a sum of
(possibly) improper superpositions. The following lemma, which is an analogue
of~\cite[Lemma~6.4]{cccjz13:sicomp}, shows that weighted
clones are closed under such constructions.

\begin{lemma}
\label{lmProperSumImproperWeightings}
Let $\Omega$ be a weighted clone, $\omega_1, \dots, \omega_n \in \Omega$, and
$c_1, \dots, c_n \geq 0$. We will denote the arity of weighting $\omega_i$ by
$\ell_i$. For any $1 \leq i \leq n$ and $1 \leq j \leq \ell_i$, let $g_{i,j} \in
\supp(\Omega)$ be a $k$-ary operation (for some fixed arity $k$). If the $k$-ary
weighting $\mu$ defined as
\begin{equation}
\mu = \sum_{i=1}^n c_i\cdot\omega_i[g_{i,1},\dots,g_{i,\ell_i}]
\end{equation}
is proper, then $\mu \in \Omega$.
\end{lemma}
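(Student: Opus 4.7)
The plan is to realise $\mu$ as a single proper superposition of a weighting that is already known to lie in $\Omega$; closure of $\Omega$ under proper superposition with operations from $\supp(\Omega)$ then finishes the argument. The principal obstacle is that the $\omega_i$ may have different arities $\ell_i$, so one cannot directly form $\sum_i c_i\omega_i$; moreover, each individual superposition $\omega_i[g_{i,1},\ldots,g_{i,\ell_i}]$ is allowed to be improper, so the closure of $\Omega$ under proper superposition does not apply to them in isolation.

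The first step is to pad each $\omega_i$ up to the common arity $L = \sum_{i=1}^n \ell_i$. Fix a bijection $p$ from $\{(i,j) : 1\le i\le n,\ 1\le j\le \ell_i\}$ onto $\{1,\ldots,L\}$, for instance $p(i,j) = j + \sum_{i'<i}\ell_{i'}$, and set $\omega_i' = \omega_i[\e{p(i,1)}{L},\ldots,\e{p(i,\ell_i)}{L}]$. Each $\omega_i'$ is a \emph{proper} weighting: the projection $\e{j}{\ell_i}$ is sent to the projection $\e{p(i,j)}{L}$, while any non-projection $f\in\ops_D^{(\ell_i)}$ is sent to a non-projection in $\ops_D^{(L)}$ (for if the result coincided with a projection $\e{\ell}{L}$, one would be forced to have $\ell=p(i,j)$ for some $j$ and then $f=\e{j}{\ell_i}$). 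Since projections lie in $\supp(\Omega)\supseteq\JD$ and $\omega_i\in\Omega$, closure of $\Omega$ under proper superposition with operations from $\supp(\Omega)$ gives $\omega_i'\in\Omega$, and hence $\bar\omega = \sum_{i=1}^n c_i\,\omega_i' \in\Omega$ by closure under non-negative scaling and addition of equal-arity weightings.

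The second and final step is to define $k$-ary operations $h_1,\ldots,h_L$ by $h_{p(i,j)} = g_{i,j}$; this is well-defined because $p$ is a bijection, and every $h_j$ belongs to $\supp(\Omega)$ since every $g_{i,k}$ does. Using the identity $\bigl(f[\e{p(i,1)}{L},\ldots,\e{p(i,\ell_i)}{L}]\bigr)[h_1,\ldots,h_L] = f[h_{p(i,1)},\ldots,h_{p(i,\ell_i)}] = f[g_{i,1},\ldots,g_{i,\ell_i}]$, which is immediate from the definition of superposition, a short bookkeeping calculation yields $\bar\omega[h_1,\ldots,h_L] = \sum_{i=1}^n c_i\,\omega_i[g_{i,1},\ldots,g_{i,\ell_i}] = \mu$. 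Since $\mu$ is proper by hypothesis, this superposition is proper, so one last application of closure of $\Omega$ under proper superposition with operations from $\supp(\Omega)$ delivers $\mu\in\Omega$. The only nontrivial ingredient is the verification that the padded weightings $\omega_i'$ are proper; everything else is purely formal.
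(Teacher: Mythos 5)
Your proposal is correct and follows essentially the same route as the paper: pad each $\omega_i$ to the common arity $L=\sum_i\ell_i$ by superposing with projections (which is always proper, as negative weights can only land on projections), sum to get a weighting in $\Omega$, and then recover $\mu$ by one final proper superposition with the $g_{i,j}$. The only difference is cosmetic --- you spell out explicitly why the projection-padded superpositions are proper, where the paper dispatches it with a parenthetical remark.
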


\begin{proof}
We show that weighting $\mu$ can be constructed using proper superpositions
only.

Let us denote $\sum_{1\leq i \leq n} \ell_i$ by $t$. For any $1 \leq m \leq n$,
let $s_m = \sum_{1\leq i < m} \ell_i$.
A superposition with projections is always proper (as all negative weights are
transferred to projections), and therefore the $t$-ary weighting $\mu'$ defined
as
\begin{equation}
\mu' = \sum_{i=1}^n c_i\cdot\omega_i\left[\e{s_i+1}{t},\dots,\e{s_i+\ell_i}{t}\right]
\end{equation}
belongs to $\Omega$. Since $\mu =
\mu'[g_{1,1},\dots,g_{1,\ell_1},g_{2,1},\dots,g_{n,\ell_n}]$, we get $\mu \in
\Omega$.
\end{proof}

The following lemma has also been observed in~\cite{Kozik15:icalp,tz15:icalp}.

\begin{lemma}\label{lmSupportWCloneIsClone}
Let $\Omega$ be a weighted clone. Then $\supp(\Omega)$ is a clone.
\end{lemma}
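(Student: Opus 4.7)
The plan is to verify the two defining properties of a clone for $\supp(\Omega)$: membership of all projections and closure under superposition. The first is immediate from the definition $\supp(\Omega)=\JD\cup\bigcup_{\omega\in\Omega}\supp(\omega)$, since $\JD\subseteq\supp(\Omega)$ is explicit.

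For closure under superposition I would fix $f\in\supp(\Omega)^{(k)}$ together with $g_1,\dots,g_k\in\supp(\Omega)^{(\ell)}$, set $h=f[g_1,\dots,g_k]$, and dispose of two easy cases. If $h\in\JD$ then $h\in\supp(\Omega)$; and if $h=g_i$ for some $i$ (which subsumes the possibility that $f$ is a projection, since then $h=g_i$), then $h\in\supp(\Omega)$ by hypothesis on $g_i$. So I may assume $h$ is not a projection and $h\ne g_i$ for every $i$. In particular $f\notin\JD$, so $f\in\supp(\Omega)$ provides some $\omega\in\Omega^{(k)}$ with $\omega(f)>0$.

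The key step is to convert the (possibly improper) superposition $\omega[g_1,\dots,g_k]$ into an element of $\Omega$ that witnesses $h\in\supp(\Omega)$. From
\[
  \omega[g_1,\dots,g_k](h)\;=\;\sum_{f':\,f'[g_1,\dots,g_k]=h}\omega(f')
\]
any negative term would force $f'$ to be a projection $\e{i}{k}$, and hence $h=g_i$, which is excluded; since $\omega(f)>0$ contributes, the weight on $h$ is strictly positive. The same observation shows that negative weights of $\omega[g_1,\dots,g_k]$ can appear only on the operations $g_1,\dots,g_k$. For each index $i$ with $g_i\notin\JD$, the hypothesis $g_i\in\supp(\Omega)$ supplies some $\omega_i\in\Omega^{(\ell)}$ with $\omega_i(g_i)>0$, and I would form
\[
  \mu\;=\;\omega[g_1,\dots,g_k]\;+\;\sum_{i:\,g_i\notin\JD} c_i\,\omega_i,
\]
with each $c_i>0$ large enough to make the coefficient on $g_i$ non-negative. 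Because each $\omega_i$ is proper, the additions only contribute non-negative amounts to non-projection operations, so $\mu$ is a proper weighting and still satisfies $\mu(h)>0$. Viewing each $c_i\omega_i$ as the trivial superposition $c_i\omega_i[\e{1}{\ell},\dots,\e{\ell}{\ell}]$, \Cref{lmProperSumImproperWeightings} yields $\mu\in\Omega$, whence $h\in\supp(\mu)\subseteq\supp(\Omega)$.

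The main obstacle is the properness bookkeeping in the last step: one must pin down exactly where $\omega[g_1,\dots,g_k]$ can carry negative weight (only on the $g_i$'s), verify that each correcting weighting $\omega_i$, being proper, does not introduce fresh negative weights on non-projections, and then invoke \Cref{lmProperSumImproperWeightings} to promote the resulting sum of (possibly improper) superpositions to a genuine element of the weighted clone $\Omega$.
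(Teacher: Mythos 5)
Your proof is correct and follows essentially the same route as the paper's: dispose of the trivial cases, use $f\in\supp(\Omega)$ to get $\omega$ with $\omega(f)>0$, observe that $\omega[g_1,\dots,g_k]$ can carry negative weight only on the $g_i$'s and is still positive on $h$, then patch properness by adding suitable non-negative multiples of weightings $\omega_i\in\Omega$ with $\omega_i(g_i)>0$, and finish with \Cref{lmProperSumImproperWeightings}. The only (harmless) cosmetic differences are that you correct over all $i$ with $g_i\notin\JD$ rather than only those where $\omega[g_1,\dots,g_k](g_i)<0$, and you note explicitly that $h\ne g_i$ for all $i$ already forces $f\notin\JD$.
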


\begin{proof}
We will denote $\supp(\Omega)$ by $C$. Since it contains all projections, we
only need to show that it is closed under superposition.

Let $f \in C^{(k)}$ and $g_1, \dots, g_k \in C^{(\ell)}$. If $f[g_1, \dots,
g_k]$ is a projection or is equal to $g_i$ for some $i$, then it clearly belongs
to $C$. Otherwise, $f$ is not a projection and therefore there is a $k$-ary
weighting $\omega \in \Omega$ for which $\omega(f) > 0$. Weighting $\omega[g_1,
\dots, g_k]$ certainly assigns a positive weight to $f[g_1, \dots, g_k]$ (we are
using the fact that only operations $g_1, \dots, g_k$ may receive negative
weight from projections in $\omega$). However, it is possibly improper, as it
may assign a negative weight to some $g_i$ that is not a projection.

We denote by $G$ the set of such operations $g \in \{g_1, \dots, g_k\}$ that are
not projections and $\omega[g_1, \dots, g_k](g) < 0$. For any $g \in G$, there
is an $\ell$-ary weighting $\mu_g \in \Omega$ for which $\mu_g(g) > 0$. Then the
$\ell$-ary weighting defined as
\begin{align}
\omega[g_1,\dots,g_k] + \sum_{g\in G}
\frac{-\omega[g_1,\dots,g_k](g)}{\mu_g(g)}\cdot\mu_g
\end{align}
is proper, belongs to $\Omega$ (by \Cref{lmProperSumImproperWeightings}), and
assigns a positive weight to $f[g_1, \dots, g_k]$.
\end{proof}

Again, we link weightings and weighted relations by the concept of weighted
polymorphism.

\begin{definition} \label{def:wpNEW}
Let $\gamma$ be an $m$-ary weighted relation
on $D$ and let $\omega$ be a $k$-ary weighting on $D$.
We call $\omega$ a \emph{weighted polymorphism}
of $\gamma$ if $\supp(\omega)\subseteq\Pol(\gamma)$ and for any
$(\tup{x}_1,\ldots,\tup{x}_k) \in (\Feas(\gamma))^k$,
we have
\begin{equation}\label{eq:wpolNEW}
  \sum_{f\in\supp(\omega)}\omega(f)\cdot\gamma(f(\tup{x}_1,\ldots,\tup{x}_k)) \ \leq\ 0\,.
\end{equation}
If $\omega$ is a weighted polymorphism of $\gamma$ we say that $\gamma$ is
\emph{improved} by $\omega$. We will denote the set of weighted polymorphisms of
$\Gamma$ by $\RwPol(\Gamma)$ and the set of weighted relations improved by
$\Omega$ by $\RImp(\Omega)$.
\end{definition}

The next lemma shows that $\supp(\Omega)$ consists of all polymorphisms of
$\RImp(\Omega)$ and hence fulfills the same role as the support clone in
\Cref{defn:wclone}.

\begin{lemma}\label{lmSupportEqPolImp}
Let $\Omega \subseteq \Rwops_D$ be a weighted clone. Then $\supp(\Omega) =
\Pol(\RImp(\Omega))$.
\end{lemma}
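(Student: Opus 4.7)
The plan is to prove the two inclusions separately. The forward inclusion $\supp(\Omega) \subseteq \Pol(\RImp(\Omega))$ is immediate from the definitions: given $f \in \supp(\Omega)$, either $f$ is a projection (and therefore a polymorphism of every relation, hence of every weighted relation), or $\omega(f) > 0$ for some $\omega \in \Omega$. In the latter case, for every $\gamma \in \RImp(\Omega)$ the weighted-polymorphism condition of \Cref{def:wpNEW} demands $\supp(\omega) \subseteq \Pol(\gamma)$, so in particular $f \in \Pol(\gamma)$.

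The reverse inclusion $\Pol(\RImp(\Omega)) \subseteq \supp(\Omega)$ carries the substantive content, but it reduces to the classical Geiger/BKKR Galois connection between clones and (unweighted) relational clones on a finite domain. The key step is to observe that every relation $R$ preserved by all operations in $\supp(\Omega)$ already belongs to $\RImp(\Omega)$. To see this, view $R$ as a $\{c,\infty\}$-valued weighted relation with $\Feas(R)=R$. For any $k$-ary $\omega \in \Omega$ and any tuples $\tup{x}_1,\dots,\tup{x}_k \in R$, every $f \in \supp(\omega) \subseteq \supp(\Omega)$ preserves $R$, so $f(\tup{x}_1,\dots,\tup{x}_k) \in R$ and hence $R(f(\tup{x}_1,\dots,\tup{x}_k))=c$. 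Therefore
\[
\sum_{f \in \supp(\omega)} \omega(f)\cdot R\bigl(f(\tup{x}_1,\dots,\tup{x}_k)\bigr) \;=\; c \sum_{f \in \supp(\omega)} \omega(f) \;=\; 0 \;\leq\; 0,
\]
where the last equality uses \Cref{eqRZeroWeightsSum} together with the fact that $\omega$ vanishes outside $\supp(\omega)$ (as $\omega$ is proper). Thus $\omega$ improves $R$, and $R \in \RImp(\Omega)$.

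By \Cref{lmSupportWCloneIsClone} the set $C:=\supp(\Omega)$ is a clone on the finite set $D$, so the classical Galois connection yields $\Pol(\Inv(C))=C$, where $\Inv(C)$ denotes the set of all finitary relations preserved by $C$. Combining this with the containment $\Inv(C) \subseteq \RImp(\Omega)$ just established,
\[
\Pol(\RImp(\Omega)) \;\subseteq\; \Pol(\Inv(C)) \;=\; C \;=\; \supp(\Omega),
\]
which completes the proof. I do not anticipate a serious obstacle here: the only subtlety is the need to track the additive constant $c$ in the $\{c,\infty\}$-valued representation of $R$, which is handled by the identity $\sum_f \omega(f)=0$. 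All the genuinely hard work is offloaded onto the classical (unweighted) Galois connection, which is applicable precisely because $D$ is finite and $\supp(\Omega)$ is a clone.
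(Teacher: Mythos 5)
Your proof is correct and follows essentially the same route as the paper's: the forward inclusion from the definition of weighted polymorphism, and the reverse inclusion via establishing $\Inv(\supp(\Omega)) \subseteq \RImp(\Omega)$ and then invoking \Cref{lmSupportWCloneIsClone} together with the classical Geiger/BKKR Galois connection to conclude $\Pol(\Inv(\supp(\Omega))) = \supp(\Omega)$. The only difference is that you spell out the calculation showing an unweighted invariant relation is improved by every $\omega \in \Omega$ (including the observation that $\omega$ vanishes off $\supp(\omega)$, so $\sum_{f\in\supp(\omega)}\omega(f)=0$), which the paper states as one sentence.
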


\begin{proof}
We will denote $\supp(\Omega)$ by $C$. Projections are polymorphisms of every
weighted relation, and any operation $f$ with $\omega(f) > 0$ for some $\omega
\in \Omega$ is a polymorphism of $\RImp(\Omega)$ by the definition of weighted
polymorphism. Therefore, $C \subseteq \Pol(\RImp(\Omega))$.

Let $\Inv(C)$ be the set of (unweighted) relations over $D$ that are invariant
under all operations from $C$ (i.e.\ operations from $C$ are their
polymorphisms). As any relation invariant under $\supp(\omega)$ is also improved
by $\omega$, we have $\Inv(C) \subseteq \RImp(\Omega)$ and thus
$\Pol(\RImp(\Omega)) \subseteq \Pol(\Inv(C)) = C$ (the last equality follows
from the Galois connection between relational clones and clones of
operations~\cite{Bodnarchuk69:closed,Geiger68:closed} and
\Cref{lmSupportWCloneIsClone}).
\end{proof}

The following corollary has been observed in the context of Min-Sol-Hom and
Min-Cost-Hom~\cite{Uppman13:icalp} by Hannes Uppman.\footnote{Private
communication, 2014.}

\begin{corollary}\label{thSupportWPolEqPol}
Let $\Gamma \subseteq \Rwrel_D$ be a weighted relational clone. Then
$\supp(\RwPol(\Gamma)) = \Pol(\Gamma)$.
\end{corollary}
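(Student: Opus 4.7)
The plan is to derive the corollary as an immediate consequence of \Cref{lmSupportEqPolImp} applied to the weighted clone $\Omega = \RwPol(\Gamma)$. That lemma gives $\supp(\RwPol(\Gamma)) = \Pol(\RImp(\RwPol(\Gamma)))$, so the corollary reduces to the identity $\RImp(\RwPol(\Gamma)) = \Gamma$, after which the desired equation $\supp(\RwPol(\Gamma)) = \Pol(\Gamma)$ follows by simple substitution.

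Two prerequisites are required to make this substitution legitimate. First, $\RwPol(\Gamma)$ must actually be a weighted clone in the sense of \Cref{defn:wcloneNEW}; this is the content of the forward-referenced \Cref{lmWPolIsClosedClone}. Closure under scaling, addition, and proper superposition is immediate from linearity of inequality \eqref{eq:wpolNEW}, and topological closure holds because \eqref{eq:wpolNEW} is preserved under pointwise limits (each term on the left stays non-positive) and because any operation $f$ receiving positive weight in a limit $\omega$ already receives positive weight in some sufficiently late element of the approximating sequence, so $f \in \supp(\omega_n) \subseteq \Pol(\gamma)$ for every $\gamma \in \Gamma$. Second, we need $\RImp(\RwPol(\Gamma)) = \Gamma$, and this is exactly where the main theorem enters: \Cref{thm:main}\,(1) yields $\RImp(\RwPol(\Gamma)) = \RwRelClone(\Gamma)$, which equals $\Gamma$ because $\Gamma$ is already assumed to be a weighted relational clone.

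Combining the two ingredients, \Cref{lmSupportEqPolImp} gives $\supp(\RwPol(\Gamma)) = \Pol(\RImp(\RwPol(\Gamma))) = \Pol(\Gamma)$, which is the claim. There is no substantive obstacle on this side of the argument: all the heavy lifting lives in \Cref{lmSupportEqPolImp} (whose proof invokes the classical Bodnar\v{c}uk--Kalu\v{z}nin--Kotov--Romov/Geiger Galois connection between relational clones and clones) and in the forthcoming proof of \Cref{thm:main}\,(1) (which establishes the full weighted Galois connection via convex-optimisation duality). The corollary is essentially a translation of that weighted Galois connection into the language of supports of weightings, exploiting the fact that the new $\supp$-based formulation of weighted clones has been tailored precisely so that the classical $\Pol$/$\Inv$ machinery lifts verbatim to the weighted setting.
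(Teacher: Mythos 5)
Your proof is correct and follows exactly the same route as the paper's: both apply \Cref{lmSupportEqPolImp} to the weighted clone $\RwPol(\Gamma)$ (justified by \Cref{lmWPolIsClosedClone}) and then invoke \Cref{thGaloisGamma} together with the hypothesis that $\Gamma$ is already a weighted relational clone to identify $\RImp(\RwPol(\Gamma))$ with $\Gamma$. The extra detail you give about why $\RwPol(\Gamma)$ is topologically closed is a restatement of part of \Cref{lmWPolIsClosedClone}, which the paper simply cites.
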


\begin{proof}
We are going to use the Galois connection established later in \Cref{sub:main}.

As $\RwPol(\Gamma)$ is a weighted clone (\Cref{lmWPolIsClosedClone}), by
\Cref{lmSupportEqPolImp} we have $\supp(\RwPol(\Gamma)) =
\Pol(\RImp(\RwPol(\Gamma))) = \Pol(\Gamma)$ (the last equality follows from
\Cref{thGaloisGamma}).
\end{proof}

Finally, we introduce some notation that will be used throughout
\Cref{sub:main}. A sequence of $k$ $m$-tuples over $D$ can be written as $X =
(\tup{x}_1, \dots, \tup{x}_k) \in (D^m)^k$. By $X^T$ we will denote the
transpose of $X$, i.e.\ the sequence of $m$ $k$-tuples $(\tup{y}_1, \dots,
\tup{y}_m) \in (D^k)^m$ such that $\tup{y}_i = (\tup{x}_1[i], \dots,
\tup{x}_k[i])$. Let $f$ be a $k$-ary operation; we denote by $f(X)$ the
$m$-tuple obtained by applying $f$ coordinatewise to $\tup{x}_1, \dots,
\tup{x}_k$, i.e. $f(X) = f(\tup{x}_1, \dots, \tup{x}_k) = (f(\tup{y}_1), \dots,
f(\tup{y}_m))$.

Let $\gamma \in \Rwrel_D$ be a weighted relation and $\omega \in \Rwops_D$ a
$k$-ary weighting with $\supp(\omega) \subseteq \Pol(\gamma)$. Let us denote by
$H$ the Hilbert space of functions $\Pol^{(k)}(\gamma) \to \r$ with the inner
product analogous to \eqref{eqWopsInnerProduct}. As weighting $\omega$ assigns
non-zero weights only to operations from $\supp(\omega) \subseteq
\Pol^{(k)}(\gamma)$, we can identify $\omega$ with its restriction to
$\Pol^{(k)}(\gamma)$. For any $X \in (\Feas(\gamma))^k$, we denote by
$\gamma[X]$ the vector in $H$ such that $\gamma[X](f) = \gamma(f(X))$ for all $f
\in \Pol^{(k)}(\gamma)$. Inequality \eqref{eq:wpolNEW} can be then written as
$\inn{\omega}{\gamma[X]} \leq 0$.

The (internal) polar cone $K^\circ$ of a set $K \subseteq H$ is defined as
\begin{equation}
K^\circ = \Big\{ \alpha\in H ~\Big|~
             \inn{\alpha}{\beta} \leq 0 \text{ for all } \beta\in K \Big\} \,.
\end{equation}
It is well known (\cite{boyd2004convex,hiriart2001fundamentals}) that $K^\circ$
is a convex cone, i.e.\ $K^\circ$ is closed under addition of vectors and
scaling by non-negative constants. Moreover, $K^\circ$ is topologically closed,
and $K^{\circ\circ} = (K^\circ)^\circ$ is the closure of the smallest convex
cone containing $K$.%
\footnote{If $K$ is a finite set, the smallest convex cone containing $K$ is
topologically closed. This is why the former definitions of weighted
(relational) clones did not have to require topological closedness explicitly.
}

Let
\begin{equation}
K = \left\{ \gamma[X] ~\middle|~ X \in (\Feas(\gamma))^k \right\} \,.
\end{equation}
Weighting $\omega$ is then a weighted polymorphism of $\gamma$ if and only if
$\omega \in K^\circ$.

\subsection{Main Proofs}
\label{sub:main}

\begin{lemma}
\label{lmWPolIsClosedClone}
For any finite $D$ and any $\Gamma \subseteq \Rwrel_D$, $\RwPol(\Gamma)$ is a
weighted clone.
\end{lemma}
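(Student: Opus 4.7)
The plan is to verify in turn the four defining conditions of \Cref{defn:wcloneNEW}: closure under non-negative scaling, addition of same-arity weightings, proper superposition with operations from $\supp(\RwPol(\Gamma))$, and topological closedness. Scaling and addition will be immediate, since both operations preserve the proper-weighting conditions (zero sum, non-negativity off projections), leave $\supp(\omega) \subseteq \Pol(\gamma)$ intact for every $\gamma \in \Gamma$, and preserve the polymorphism inequality by linearity.

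The superposition step needs a bit more care. The key preliminary observation is that $\supp(\RwPol(\Gamma)) \subseteq \Pol(\Gamma)$: any non-projection $g$ with $\omega(g) > 0$ for some $\omega \in \RwPol(\Gamma)$ must lie in $\Pol(\gamma)$ for every $\gamma \in \Gamma$ directly from \Cref{def:wpNEW}. Given a $k$-ary $\omega \in \RwPol(\Gamma)$, $\ell$-ary $g_1,\dots,g_k \in \supp(\RwPol(\Gamma))$, and assuming $\mu := \omega[g_1,\dots,g_k]$ is proper, I would first check that $\supp(\mu) \subseteq \Pol(\gamma)$ for every $\gamma \in \Gamma$: any $f' \in \supp(\mu)$ arises as $f[g_1,\dots,g_k]$ for some $f$ with $\omega(f) \neq 0$; such an $f$ lies in $\Pol(\gamma)$ (either because it is a projection, the only source of negative weight, or because $\omega(f) > 0$), and since $\Pol(\gamma)$ is a clone containing the $g_i$'s, so does $f'$. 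Second, for any $\gamma \in \Gamma$ and $X \in (\Feas(\gamma))^\ell$, putting $Y = (g_1(X),\dots,g_k(X)) \in (\Feas(\gamma))^k$, a direct expansion of the superposition definition gives
\begin{equation}
\sum_{f'} \mu(f')\,\gamma(f'(X)) \;=\; \sum_{f} \omega(f)\,\gamma(f(Y)) \;\leq\; 0
\end{equation}
by the weighted polymorphism inequality for $\omega$ applied at $Y$.

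For topological closedness I would work within each fixed arity $k$, where $\ops_D^{(k)}$ is finite so that the ambient Hilbert space is finite-dimensional Euclidean. Then $\RwPol^{(k)}(\Gamma)$ is the intersection of the closed linear subspace defined by $\sum_f \omega(f) = 0$, the closed half-spaces $\omega(f) \geq 0$ for every non-projection $f$, the closed equalities $\omega(f) = 0$ for every non-projection $f \notin \Pol(\Gamma)$ (obtained by combining properness with $\supp(\omega) \subseteq \Pol(\gamma)$), and the closed half-spaces $\inn{\omega}{\gamma[X]} \leq 0$, one for every $\gamma \in \Gamma$ and every $X \in (\Feas(\gamma))^k$, so it is closed. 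The only step that really requires care is the support bookkeeping in the superposition argument; the rest is essentially linearity and elementary topology, and I do not anticipate a significant obstacle.
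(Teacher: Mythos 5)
Your proof is correct and follows essentially the same route as the paper, which packages the scaling/addition/closedness part a bit more compactly by noting that the restrictions of $k$-ary weighted polymorphisms to $\Pol^{(k)}(\Gamma)$ form the polar cone $K^\circ$ of $K=\{\gamma[X]\mid\gamma\in\Gamma,\,X\in(\Feas(\gamma))^k\}$ inside the Hilbert space of functions $\Pol^{(k)}(\Gamma)\to\mathbb{R}$, and then invokes that polar cones are closed convex cones; your explicit intersection-of-closed-sets argument says the same thing more elementarily, and your superposition step matches the paper's verbatim. One small point of care in the closedness part: $\gamma[X]$ is not literally a vector of the ambient space $\mathbb{R}^{\ops_D^{(k)}}$, since $\gamma(f(X))$ can be $\infty$ when $f\notin\Pol(\gamma)$, so $\{\omega\mid\inn{\omega}{\gamma[X]}\leq 0\}$ is not a half-space there; this is repaired precisely because you have already imposed $\omega(f)=0$ for non-projections $f\notin\Pol(\Gamma)$ (equivalently, by working in the Hilbert space indexed by $\Pol^{(k)}(\Gamma)$ as the paper does), after which the functional is finite, linear, and continuous on the relevant closed subspace and the argument goes through.
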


\begin{proof}
Let $k \geq 1$ be a fixed arity. We denote by $H$ the Hilbert space of functions
$\Pol^{(k)}(\Gamma) \to \mathbb{R}$ and define a set $K \subseteq H$ as
\begin{equation}
K = \left\{ \gamma[X] ~\middle|~
            \gamma\in\Gamma\wedge X\in(\Feas(\gamma))^k \right\} \,.
\end{equation}

A $k$-ary weighting $\omega$ with $\supp(\omega) \subseteq \Pol^{(k)}(\Gamma)$
is a weighted polymorphism of $\Gamma$ if and only if its restriction to
$\Pol^{(k)}(\Gamma)$ belongs to $K^\circ$. Set $\RwPol^{(k)}(\Gamma)$ is
therefore closed under addition as $K^\circ$ is convex, it is closed under
non-negative scaling as $K^\circ$ is a cone, and it is topologically closed as
$K^\circ$ is.

It remains to show that $\RwPol(\Gamma)$ is closed under superposition. Let
$\omega \in \RwPol(\Gamma)$ be a $k$-ary weighting and $g_1, \dots, g_k \in
\Pol(\Gamma)$ be $\ell$-ary operations. For any $\gamma \in \Gamma$ and $X \in
(\Feas(\gamma))^\ell$, we have $Y = (g_1(X), \dots, g_k(X)) \in
(\Feas(\gamma))^k$ and
\begin{align}
\sum_{f\in\supp(\omega[g_1,\dots,g_k])}
    \omega[g_1,\dots,g_k](f)\cdot\gamma(f(X))
&= \sum_{f\in\supp(\omega)} \omega(f)\cdot\gamma(f[g_1,\dots,g_k](X)) \\
&= \sum_{f\in\supp(\omega)} \omega(f)\cdot\gamma(f(Y)) \\
&\leq 0 \,.
\end{align}
Therefore, weighting $\omega[g_1, \dots, g_k]$ is a weighted polymorphism of
$\Gamma$.
\end{proof}

\begin{lemma}
\label{lmImpIsClosedClone}
For any finite $D$ and any $\Omega \subseteq \Rwops_D$, $\RImp(\Omega)$ is a
weighted relational clone.
\end{lemma}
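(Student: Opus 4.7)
The plan is to verify each of the defining properties in Definition~\ref{defn:wrelcloneOPT}. The inclusion of $\phi_=$ and $\phi_\emptyset$ is immediate: $\phi_\emptyset$ has empty feasibility, so $\Pol(\phi_\emptyset) = \ops_D$ and \eqref{eq:wpolNEW} is vacuous; for $\phi_=$, every operation preserves the diagonal, so $\supp(\omega) \subseteq \Pol(\phi_=)$, and since every $f(X)$ takes the common constant value $c$, the inequality reduces to $c \sum_f \omega(f) = 0$ by \eqref{eqRZeroWeightsSum}. Closure under scaling by a non-negative real, addition of a real constant, addition of two weighted relations, and minimisation I would then establish along the lines of \cite[Prop.~6.2]{cccjz13:sicomp}. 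The first three follow from linearity of \eqref{eq:wpolNEW} in $\gamma$ together with the cancellation of additive constants via $\sum_f \omega(f) = 0$. For minimisation, with $\gamma(\tup{x}) = \min_{\tup{y}} \gamma'(\tup{x}, \tup{y})$ and $X = (\tup{x}_1,\dots,\tup{x}_k) \in (\Feas(\gamma))^k$, I pick witnesses $\tup{y}_i$ realising each minimum and set $Y = (\tup{y}_1,\dots,\tup{y}_k)$; then $\gamma(f(X)) \leq \gamma'(f(X), f(Y))$ with equality whenever $f$ is a projection (where both sides equal $\gamma'(\tup{x}_i,\tup{y}_i)$). Splitting the sum by the sign of $\omega(f)$ shows $\sum_f \omega(f)\gamma(f(X)) \leq \sum_f \omega(f)\gamma'(f(X),f(Y)) \leq 0$, and $f(X) \in \Feas(\gamma)$ follows from $f(X,Y) \in \Feas(\gamma')$.

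The genuinely new point is closure under $\Opt$. Fix $\gamma \in \RImp(\Omega)$ with $\Opt(\gamma) \neq \emptyset$ (otherwise $\Opt(\gamma) = \phi_\emptyset$, already handled), let $m$ be the common finite value of $\gamma$ on $\Opt(\gamma)$, and take $\omega \in \Omega^{(k)}$ and $X \in (\Opt(\gamma))^k$. Because $\supp(\omega) \subseteq \Pol(\gamma)$, every $f(X)$ lies in $\Feas(\gamma)$ and hence $\gamma(f(X)) \geq m$. Exploiting $\sum_f \omega(f) = 0$, I rewrite
\[
0 \;\geq\; \sum_{f \in \supp(\omega)} \omega(f)\,\gamma(f(X))
\;=\; \sum_{f \in \supp(\omega)} \omega(f)\bigl(\gamma(f(X)) - m\bigr).
\]
Projection terms vanish (since $f(X) = \tup{x}_i \in \Opt(\gamma)$ gives $\gamma(f(X)) = m$), and the remaining terms are products of $\omega(f) > 0$ with $\gamma(f(X)) - m \geq 0$, so each must be zero. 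Hence $f(X) \in \Opt(\gamma)$ for every $f \in \supp(\omega)$, yielding simultaneously $\supp(\omega) \subseteq \Pol(\Opt(\gamma))$ and the required inequality for $\Opt(\gamma)$ (which in fact becomes an equality).

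For topological closedness, recall that $\Rwrel_D$ carries the disjoint union topology over the Hilbert spaces indexed by pairs $(m, F)$, so any convergent sequence $\gamma_n \to \gamma$ eventually has fixed arity and $\Feas(\gamma_n) = F = \Feas(\gamma)$. Consequently, for each $\omega \in \Omega^{(k)}$ the containment $\supp(\omega) \subseteq \Pol(\gamma_n)$ transfers verbatim to $\gamma$, and the left-hand side of \eqref{eq:wpolNEW} is a continuous linear functional on the underlying Hilbert space (the sum is finite because $|D|$ and $k$ are finite), so the inequality survives taking limits. The main obstacle is the $\Opt$ step: the zero-sum identity $\sum_f \omega(f) = 0$ must be used to convert the weak improvement inequality for $\gamma$ into an equality that forces every $f(X)$ back inside $\Opt(\gamma)$; the rest of the verification adapts routinely from the rational-valued setting of \cite{cccjz13:sicomp}.
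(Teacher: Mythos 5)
Your proposal is correct and follows essentially the same approach as the paper's own proof. The $\Opt$ argument (subtracting the minimum value $m$ and using $\sum_f \omega(f) = 0$ to force every $f(X)$ into $\Opt(\gamma)$), the minimisation argument (witnesses realising minima, splitting by the sign of $\omega(f)$), and the topological closedness argument (the paper shows the complement is open by an explicit neighbourhood radius, you argue via convergent sequences — equivalent for these metric spaces) all match the paper's reasoning.
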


\begin{proof}
Both $\phi_=$ and $\phi_\emptyset$ are improved by any weighting and hence
belong to $\RImp(\Omega)$. Addition, non-negative scaling, and addition of a
constant preserve Inequality~\eqref{eq:wpolNEW}, and therefore $\RImp(\Omega)$
is closed under these operations.

We need to prove that $\RImp(\Omega)$ is closed under minimisation. Let $\gamma
\in \RImp(\Omega)$ be an $r$-ary weighted relation and consider $\gamma'$
obtained from $\gamma$ by minimising over the last argument, i.e.
\begin{equation}
\gamma'(x_1, \dots, x_{r-1}) =
\min_{x_r \in D} \gamma(x_1, \dots, x_{r-1}, x_r) \,.
\end{equation}
Let $\omega \in \Omega$ be a $k$-ary weighting and $X' = (\mathbf{x}_1', \dots,
\mathbf{x}_k') \in (\Feas(\gamma'))^k$. For any $i \in \{1, \dots, k\}$, we can
extend $(r-1)$-tuple $\mathbf{x}_i'$ to an $r$-tuple $\mathbf{x}_i \in
\Feas(\gamma)$ so that $\gamma'(\mathbf{x}_i') = \gamma(\mathbf{x}_i)$; we will
denote the list of these extended $r$-tuples by $X = (\mathbf{x}_1, \dots,
\mathbf{x}_k)$. Note that $f(X)$ is an extension of $f(X')$ for any $k$-ary
operation $f \in \supp(\omega)$, and therefore $\gamma'(f(X')) \leq
\gamma(f(X))$. Moreover, $\gamma'(f(X')) = \gamma(f(X))$ whenever $f$ is a
projection. As $\gamma$ satisfies Inequality~\eqref{eq:wpolNEW} and only
projections may be assigned a negative weight, we have
\begin{equation}
\sum_{f\in\supp(\omega)} \omega(f)\cdot\gamma'(f(X')) \leq
\sum_{f\in\supp(\omega)} \omega(f)\cdot\gamma(f(X)) \leq
0 \,,
\end{equation}
and thus $\gamma' \in \RImp(\Omega)$.

Now we prove that $\RImp(\Omega)$ is closed under operator $\Opt$. Let $\gamma
\in \RImp(\Omega)$ and $\rho = \Opt(\gamma)$. We will assume that
$\Feas(\gamma)$ is non-empty (otherwise $\gamma = \rho$) and denote by $c$ the
minimum weight assigned by $\gamma$. Let $\omega \in \Omega$ be a $k$-ary
weighting. As $\rho$ is a relation, we only need to show that all operations in
the support of $\omega$ are polymorphisms of $\rho$. Let $X \in (\Feas(\rho))^k
\subseteq (\Feas(\gamma))^k$. For any operation $f$ in the support of $\omega$,
it holds $\gamma(f(X)) \geq c$. Moreover, $\gamma(f(X)) = c$ whenever $f$ is a
projection. We have
\begin{equation}
0 \geq \sum_{f\in\supp(\omega)} \omega(f)\cdot \gamma(f(X)) \geq
\sum_{f\in\supp(\omega)} \omega(f)\cdot c = 0 \,,
\end{equation}
which for all $f \in \supp(\omega)$ implies $\gamma(f(X)) = c$ and hence $f(X)
\in \Feas(\rho)$. Therefore, $\rho \in \RImp(\Omega)$.

Finally, we show that $\RImp(\Omega)$ is topologically closed. Let $r$ be a
fixed arity and $F \subseteq D^r$; we claim that the set $\Gamma \subseteq
\Rwrel_D$ of $r$-ary weighted relations $\gamma$ with $\Feas(\gamma) = F$ which
are \emph{not improved} by $\Omega$ is an open set. Take any $\gamma \in
\Gamma$. There must be a non-zero weighting $\omega \in \Omega$ (let us denote
its arity by $k$) and $X \in F^k$ such that $\langle \omega, \gamma[X] \rangle =
d$ for some positive $d$, i.e.\ $\omega$ violates Inequality~\eqref{eq:wpolNEW}
for $\gamma$ and $X$. Then for every $r$-ary weighted relation $\gamma'$ with
$\Feas(\gamma') = F$ and distance from $\gamma$ less than
\begin{equation}
\frac{d}{\displaystyle\sum_{f\in\supp(\omega)} |\omega(f)|}
\end{equation}
it holds $\langle \omega, \gamma'[X]
\rangle > 0$, so $\gamma$ has a neighbourhood contained in $\Gamma$.
\end{proof}

\noindent
We are now ready to prove our main result, stated as
Theorems~\ref{thGaloisGamma} and \ref{thGaloisOmega}.

\begin{theorem}
\label{thGaloisGamma}
For any finite $D$ and any $\Gamma \subseteq \Rwrel_D$, $\RImp(\RwPol(\Gamma)) =
\RwRelClone(\Gamma)$.
\end{theorem}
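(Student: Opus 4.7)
The inclusion $\RwRelClone(\Gamma) \subseteq \RImp(\RwPol(\Gamma))$ is immediate from \Cref{lmImpIsClosedClone}: the right-hand side is a weighted relational clone containing $\Gamma$, hence contains the smallest such, namely $\RwRelClone(\Gamma)$. For the reverse inclusion I argue by contrapositive: set $\Delta := \RwRelClone(\Gamma)$, suppose $\gamma \notin \Delta$ has arity $m$ with $F := \Feas(\gamma)$, and construct $\omega \in \RwPol(\Gamma)$ with $\omega \notin \RwPol(\gamma)$. The main tool is convex separation in the Hilbert spaces introduced in \Cref{sub:prelim}.

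The argument splits into a relational bootstrap and a weighted step. By \Cref{lmImpIsClosedClone} the relation $\phi_F = 0\cdot\gamma$ lies in $\RImp(\RwPol(\Gamma))$. A preliminary bootstrap --- combining the separation framework below with closure of $\Delta$ under the $\Opt$ operator, the new ingredient in \Cref{defn:wrelcloneOPT} --- yields $\phi_F \in \Delta$. From this one deduces $F \in \Inv(\Pol(\Gamma))$, and therefore $\Pol(\Gamma) \subseteq \Pol(\gamma)$, which places the vectors $\gamma[X]$ (for $X \in F^k$) and $\gamma'[Y]$ (for $\gamma' \in \Gamma$, $Y \in (\Feas(\gamma'))^k$) into a common Hilbert space $H_k := \r^{\Pol^{(k)}(\Gamma)}$.

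Now fix $k = |F|$ and $X = (\tup{x}_1, \ldots, \tup{x}_k) \in F^k$ enumerating $F$, and define
\[
K := \{\gamma'[Y] : \gamma' \in \Gamma,\ Y \in (\Feas(\gamma'))^k\} \cup \{-\mathbf{e}_f : f \in \Pol^{(k)}(\Gamma) \setminus \JD^{(k)}\} \cup \{\pm \mathbf{1}\} \subseteq H_k,
\]
the auxiliary vectors encoding proper-weighting and sum-to-zero conditions. Then $\RwPol^{(k)}(\Gamma) = K^\circ$, and by the bipolar theorem (cf.\ \cite{boyd2004convex,hiriart2001fundamentals}), $K^{\circ\circ}$ is the closed convex conic hull of $K$ in $H_k$. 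It suffices to show $\gamma[X] \notin K^{\circ\circ}$, since Hahn-Banach then provides $\omega \in K^\circ = \RwPol^{(k)}(\Gamma)$ with $\inn{\omega}{\gamma[X]} > 0$, so $\omega \notin \RwPol(\gamma)$. To this end, assume otherwise: unpack $\gamma[X] = \lim_n v_n$ with $v_n = \sum_j c_{n,j}\gamma'_{n,j}[Y_{n,j}] - \sum_f d_{n,f}\mathbf{e}_f + e_n\mathbf{1}$ and $c_{n,j}, d_{n,f}\ge 0$, and define
\[
\gamma_n(\tup{y}) := \min_{f\in\Pol^{(k)}(\Gamma),\, f(X)=\tup{y}} \Bigl(\sum_j c_{n,j}\gamma'_{n,j}(f(Y_{n,j})) + e_n\Bigr),
\]
which is gadget-expressible over $\Gamma$, hence $\gamma_n \in \Delta$. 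From $v_n(f) + d_{n,f}[f\notin\JD^{(k)}] = \sum_j c_{n,j}\gamma'_{n,j}(f(Y_{n,j})) + e_n$, componentwise convergence $v_n(f) \to \gamma(f(X))$, and finiteness of $\Pol^{(k)}(\Gamma)$, one obtains $\gamma_n(\tup{x}_i) \to \gamma(\tup{x}_i)$ for every $i$ --- the upper bound via the projection $\e{i}{k}$, the lower bound via the minimum over finitely many $f$'s. Hence $\gamma_n|_F \to \gamma|_F$ in $\r^F$; adding $\phi_F \in \Delta$ pins $\Feas(\gamma_n+\phi_F) = F$ for large $n$, and topological closedness of $\Delta$ forces $\gamma \in \Delta$, the desired contradiction.

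The main obstacle is the relational bootstrap, precisely where closure of $\Delta$ under $\Opt$ is indispensable: the separation produces gadgets $\phi_n \in \Delta$ approximating $\phi_F$ on $F$, and $\Opt$ applied to suitable translates of the $\phi_n$'s carves out exactly $F$ as a relation in $\Delta$. Combined with the other new ingredient --- topological closedness of $\Delta$, needed for the Hilbert-space bipolar theorem to take the place of the finite LP-duality of Cohen et al.\ --- this delivers the Galois connection in the infinite setting.
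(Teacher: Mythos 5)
Your high-level strategy (bipolar theorem in the Hilbert space $\r^{\Pol^{(k)}(\Gamma)}$, with auxiliary vectors encoding the sum-to-zero and proper-weighting constraints, and $\Opt$ plus topological closedness supplying the last two steps) is the same as the paper's. However, your ``relational bootstrap'' has a genuine gap that breaks the argument.

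The core problem is the chain
\[
\phi_F \in \Delta \;\Longrightarrow\; F \in \Inv(\Pol(\Gamma)) \;\Longrightarrow\; \Pol(\Gamma) \subseteq \Pol(\gamma),
\]
which you use to place $\gamma[X]$ in $H_k = \r^{\Pol^{(k)}(\Gamma)}$. The first implication is false. From $\phi_F \in \Delta$ you only get $F \in \Inv(\Pol(\Delta))$, and because $\Delta$ is closed under $\Opt$, the clone $\Pol(\Delta)$ is in general a \emph{proper} subclone of $\Pol(\Gamma)$. A concrete counterexample: take $D = \{0,1\}$, $\Gamma = \{\gamma\}$ with $\gamma(0)=0$, $\gamma(1)=1$. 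Here $\Pol(\Gamma)$ is all of $\ops_D$ while $\Opt(\gamma) = \{0\} \in \Delta$ forces $\Pol(\Delta) \subseteq \{f : f(0,\dots,0)=0\}$; for $F = \{0\}$ we have $\phi_F \in \Delta$ but $F \notin \Inv(\Pol(\Gamma))$. So even granting the bootstrap, you cannot conclude that every $f \in \Pol^{(k)}(\Gamma)$ maps $X \in F^k$ back into $F$, and $\gamma[X]$ may take the value $\infty$ on some coordinates, i.e.\ it is not a vector of $H_k$ at all. Replacing $\Pol(\Gamma)$ by $\Pol(\Delta)$ does not repair this either: to know $\Pol(\Delta) \subseteq \Pol(\gamma)$ for $\gamma \in \RImp(\RwPol(\Gamma))$ you would need $\Pol(\Delta) = \supp(\RwPol(\Gamma))$, which is exactly \Cref{thSupportWPolEqPol} --- proved in the paper as a \emph{corollary} of this theorem, so invoking it here is circular.

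More fundamentally, the ``relational bootstrap'' itself --- the claim that $\phi_F \in \Delta$ --- is left unproved, and it is not a simplification of the main argument: it faces precisely the same obstruction. The paper does not attempt any such two-phase reduction. Instead it works throughout with the possibly nonempty set $Q = \{ f \in \Pol^{(k)}(\Gamma) : f(R) \notin \Feas(\rho) \}$, observes that every $\omega \in W^\circ$ vanishes on $Q$, and then constructs a substitute vector $\beta$ (with arbitrary finite values on $Q$) together with an auxiliary $\beta_0$ that is strictly positive on $Q$ and zero elsewhere; decomposing $W^{\circ\circ}$ as $V^{\circ\circ}$ minus the cone on the $\eps_f$'s yields $\alpha, \alpha_0 \in V^{\circ\circ}$ and hence $\psi, \psi_0 \in \Delta$, from which the target $\rho$ is recovered as $\rho' + \Opt(\rho_0')$ via gadget projections. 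The single use of $\Opt$ is there, at the very end, to kill the spurious finite values that the gadget for $\rho'$ produces outside $\Feas(\rho)$ --- that is the precise place where your ``carves out exactly $F$'' intuition is made rigorous, and it requires the carefully engineered $\alpha_0$, not an $\Opt$ of a generic approximant $\phi_n$, whose $\Opt$ could well be a strict subset of $F$. A smaller issue: the gadget you describe yields a minimum over \emph{all} $k$-ary operations, not over $\Pol^{(k)}(\Gamma)$, so you would also need the paper's trick of adding $0 \cdot \mu'_{\gamma_f,X_f}$ terms to force infinities off the polymorphism orbit; that part is fixable, but the $\Pol(\Gamma)$-vs-$\Pol(\Delta)$ and bootstrap issues are not, at least not without essentially redoing the paper's $Q$-handling.
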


\begin{proof}
First, we show that $\RwRelClone(\Gamma) \subseteq \RImp(\RwPol(\Gamma))$.
Surely $\Gamma \subseteq \RImp(\RwPol(\Gamma))$ and therefore
$\RwRelClone(\Gamma) \subseteq \RwRelClone(\RImp(\RwPol(\Gamma)))$. By
\Cref{lmImpIsClosedClone} we know that $\RImp(\RwPol(\Gamma))$ is a weighted
relational clone. Hence, $\RwRelClone(\RImp(\RwPol(\Gamma))) =
\RImp(\RwPol(\Gamma))$.

Now we will prove the other inclusion, $\RImp(\RwPol(\Gamma)) \subseteq
\RwRelClone(\Gamma)$. Let $\rho \in \RImp(\RwPol(\Gamma))$ be a
weighted relation and denote $|\Feas(\rho)|$ by $k$. If $k=0$, $\rho$ is
expressible from $\phi_\emptyset$ and hence $\rho \in \RwRelClone(\Gamma)$.
Otherwise, we will focus solely on the $k$-ary weighted polymorphisms of
$\Gamma$. Let us denote the Hilbert space of functions $\Pol^{(k)}(\Gamma) \to
\mathbb{R}$ by $H$; the $k$-ary weighted polymorphisms of $\Gamma$ can be then
seen as vectors from $H$. Denoting $m = |D|^k$, a $k$-ary operation on $D$ is
uniquely determined by the $m$-tuple of labels it assigns to its $m$ possible
inputs. Later we will define a correspondence between a subset of $m$-ary
weighted relations and $H$.

The outline of the proof is as follows. We transform $\Gamma$ into a set $M
\subseteq \RwRelClone(\Gamma)$ of $m$-ary weighted relations and consider their
corresponding vectors in $H$. The polar cone of these vectors equals
$\RwPol^{(k)}(\Gamma)$, and its polar cone, in turn, consists of $m$-ary
weighted relations improved by $\RwPol^{(k)}(\Gamma)$. We know that the polar
cone of the polar cone of a set is the closure of the smallest convex cone
containing this set; therefore, any $m$-ary weighted relation improved by
$\RwPol^{(k)}(\Gamma)$ belongs to $\RwRelClone(\Gamma)$. This also includes a
particular $m$-ary weighted relation that we use to express $\rho$, so we get
$\rho \in \RwRelClone(\Gamma)$.

We begin by formally defining the correspondence between certain $m$-ary weighted
relations and vectors from $H$. Let us denote by $(\tup{z}_1, \dots, \tup{z}_m)
= Z^T$ the sequence of all $k$-tuples over $D$ in an arbitrary fixed order; any
$k$-ary operation $f$ is then determined by the $m$-tuple $(f(\tup{z}_1), \dots,
f(\tup{z}_m)) = f(Z)$. Let us define a set $F \subseteq D^m$ as
\begin{equation}
F = \left\{ f(Z) ~\middle|~ f \in \Pol^{(k)}(\Gamma) \right\} \,.
\end{equation}
For any $m$-ary weighted relation $\gamma$ with $\Feas(\gamma) = F$, the
corresponding vector in $H$ is $\gamma[Z]$. Conversely, for any vector in $H$
there is a corresponding $m$-ary weighted relation with finite weights precisely
on $F$.

Now we transform $\Gamma$ into a set of $m$-ary weighted relations $M \subseteq
\RwRelClone(\Gamma)$ that captures enough information to reconstruct the set of
$k$-ary weighted polymorphisms of $\Gamma$. Let $\gamma \in \Gamma$ be an
$n$-ary weighted relation and $X \in (\Feas(\gamma))^k$; we will denote the
$k$-tuples of $X^T$ by $(\tup{x}_1, \dots, \tup{x}_n)$. We claim that there is
an $m$-ary weighted relation $\mu_{\gamma,X} \in \RwRelClone(\Gamma)$ with
$\Feas(\mu_{\gamma,X}) = F$ such that $\mu_{\gamma,X}(f(Z)) = \gamma(f(X))$ for
all $f \in \Pol^{(k)}(\Gamma)$. First, we construct an $m$-ary weighted relation
$\mu_{\gamma,X}'$ as
\begin{equation}
\mu_{\gamma,X}'(y_{\tup{z}_1}, \dots, y_{\tup{z}_m}) =
\gamma(y_{\tup{x}_1}, \dots, y_{\tup{x}_n}) \,,
\end{equation}
where $y_{\tup{z}_i}$ are variables indexed by $k$-tuples over $D$. Clearly,
$\mu_{\gamma,X}'(f(Z)) = \mu_{\gamma,X}'(f(\tup{z}_1), \dots, f(\tup{z}_m)) =
\gamma(f(\tup{x}_1), \dots, f(\tup{x}_n)) = \gamma(f(X))$. However, we are not done yet,
as $\mu_{\gamma,X}'$ assigns a finite weight to all $m$-tuples $f(Z)$ such that
$f(X) \in \Feas(\gamma)$, even if $f \not\in \Pol^{(k)}(\Gamma)$. We can easily
fix this: Let $f$ be an $k$-ary operation that is not a polymorphism of
$\Gamma$; then there is a weighted relation $\gamma_f \in \Gamma$ and $X_f \in
(\Feas(\gamma_f))^k$ such that $f(X_f) \not\in \Feas(\gamma_f)$. Adding $0 \cdot
\mu_{\gamma_f,X_f}'$ to $\mu_{\gamma,X}'$ ensures that the weighted relation
will assign infinity to $m$-tuple $f(Z)$ without changing other weights. This
can be done for all (finitely many) such operations $f$, so we obtain a weighted
relation $\mu_{\gamma,X}$ with $\Feas(\mu_{\gamma,X}) = F$.

Similarly, there are $m$-ary weighted relations $\mu_\iota, \mu_{-\iota} \in
\RwRelClone(\Gamma)$ with $\Feas(\mu_\iota) = \Feas(\mu_{-\iota}) = F$ such that
$\mu_\iota(f(Z)) = 1$ and $\mu_{-\iota}(f(Z)) = -1$ for all $f \in
\Pol^{(k)}(\Gamma)$. Again, we start with $\mu_\iota'(y_{\tup{z}_1}, \dots,
y_{\tup{z}_m}) = 1$, $\mu_{-\iota}'(y_{\tup{z}_1}, \dots, y_{\tup{z}_m}) = -1$
and then add $0 \cdot \mu_{\gamma_f,X_f}'$ for all $k$-ary operations $f \not\in
\Pol^{(k)}(\Gamma)$ to ensure that the resulting weighted relations $\mu_\iota,
\mu_{-\iota}$ assign finite weights only to $m$-tuples from $F$.

Let $\iota \in H$ be the vector assigning every operation value $1$. For any
operation $f \in \Pol^{(k)}(\Gamma)$ that is not a projection, let $\eps_f \in
H$ be the vector such that $\eps_f(f) = 1$ and $\eps_f(g) = 0$ for all $g \neq
f$. We define a set of $m$-ary weighted relations $M \subseteq
\RwRelClone(\Gamma)$, the set of corresponding vectors $V \subseteq H$, and an
auxiliary set of vectors $W \subseteq H$ as follows:
\begin{align}
M &= \left\{ \mu_{\gamma,X} ~\middle|~
             \gamma\in\Gamma \wedge X\in(\Feas(\gamma))^k \right\}
\cup \left\{ \mu_\iota, \mu_{-\iota} \right\} \\
V &= \left\{ \gamma[X] ~\middle|~
             \gamma\in\Gamma \wedge X\in(\Feas(\gamma))^k \right\}
\cup \left\{ \iota, -\iota \right\} \\
W &= V\cup\left\{ -\eps_f ~\middle|~
                  f\in\Pol^{(k)}(\Gamma)\setminus\JD^{(k)} \right\} \,.
\end{align}
We claim that the polar cone $W^\circ$ consists of $k$-ary weighted
polymorphisms of $\Gamma$. Let $\omega \in W^\circ$ be a vector. As $\langle
\omega, \iota \rangle \leq 0$ and $\langle \omega, -\iota \rangle \leq 0$, we
have $\langle \omega, \iota \rangle = 0$, i.e.\ the sum of weights of $\omega$
equals $0$. For any non-projection $f$ we have $\langle \omega, -\eps_f \rangle
\leq 0$, i.e.\ $\omega(f)$ is non-negative. Finally, for any $\gamma \in \Gamma$
and $X \in (\Feas(\gamma))^k$ it holds $\langle \omega, \gamma[X] \rangle \leq
0$; hence $\omega$ is a weighted polymorphism of $\Gamma$.

%\bigskip

Let us now return to weighted relation $\rho$ and denote the sequence of
elements of $\Feas(\rho)$ in an arbitrary fixed order by $R \in
(\Feas(\rho))^k$. As $\rho$ is improved by $\RwPol(\Gamma)$, any vector $\omega
\in W^\circ$ satisfies Inequality~\eqref{eq:wpolNEW} for $\rho$ and any $X \in
(\Feas(\rho))^k$, in particular for $X = R$. Hence, we would like to claim that
$\langle \omega, \rho[R] \rangle \leq 0$ for all $\omega \in W^\circ$ and thus
$\rho[R] \in W^{\circ\circ}$. However, $\rho[R]$ might be ill-defined: Although
$\rho \in \RImp(\RwPol(\Gamma))$, not necessarily all operations $f \in
\Pol^{(k)}(\Gamma)$ are polymorphisms of $\rho$, and therefore possibly
$\rho(f(R)) = \infty$. Let us denote the set of these problematic operations by
\begin{equation}
Q = \left\{ f \in \Pol^{(k)}(\Gamma) ~\middle|~
            f(R) \not\in \Feas(\rho) \right\} \,.
\end{equation}
On the other hand, every operation in the support of $\RwPol^{(k)}(\Gamma)$ is a
polymorphism of $\rho$. This implies that operations in $Q$ must be assigned a
zero weight by all $\omega \in W^\circ$. As $\rho[R]$ might not exist, let us
define instead a substitute vector $\beta \in H$ such that $\beta(f) =
\rho(f(R))$ for all $f \in \Pol^{(k)}(\Gamma) \setminus Q$, with arbitrary
values assigned to operations in $Q$. By the previous argument, $\beta \in
W^{\circ\circ}$. Additionally, let $\beta_0 \in H$ be a vector such that
$\beta_0(f) > 0$ if $f \in Q$, and $\beta_0(f) = 0$ otherwise. For any $\omega
\in W^\circ$ it holds $\langle \omega, \beta_0 \rangle = 0$, so $\beta_0$ also
belongs to $W^{\circ\circ}$.

Any vector in $W^{\circ\circ}$ can be obtained from some vector in
$V^{\circ\circ}$ by adding non-negative multiples of $-\eps_f$ for $f \in
\Pol^{(k)}(\Gamma) \setminus \JD^{(k)}$. Therefore, there is a vector $\alpha
\in V^{\circ\circ}$ such that $\alpha(f) \geq \beta(f) = \rho(f(R))$ for all $f
\not\in Q$, and $\alpha(f) = \beta(f) = \rho(f(R))$ when $f$ is a projection.
Also, there is a non-negative vector $\alpha_0 \in V^{\circ\circ}$ such that
$\alpha_0(f) \geq \beta_0(f) > 0$ if $f \in Q$, and $\alpha_0(f) = \beta_0(f) =
0$ if $f$ is a projection.

Vectors in $V$ correspond to weighted relations in $M \subseteq
\RwRelClone(\Gamma)$. Set $V^{\circ\circ}$ is the closure of the smallest convex
cone containing $V$, and therefore weighted relations corresponding to vectors
in $V^{\circ\circ}$ also belong to $\RwRelClone(\Gamma)$ (as it is closed under
addition and non-negative scaling, and is topologically closed). Hence, there
are $m$-ary weighted relations $\psi, \psi_0 \in \RwRelClone(\Gamma)$ with
$\Feas(\psi) = \Feas(\psi_0) = F$ such that $\psi[Z] = \alpha$ and $\psi_0[Z] =
\alpha_0$. We are going to express $\rho$ from them.

Let us denote the arity of $\rho$ by $n$ and the $k$-tuples of $R^T$ by
$(\tup{r}_1, \dots, \tup{r}_n)$. Consider the following gadget. Let
$I$ be a VCSP instance with variables $y_{\tup{z}_1}, \dots,
y_{\tup{z}_m}$ and a single constraint $\psi(y_{\tup{z}_1}, \dots,
y_{\tup{z}_m})$, and let $L = (y_{\tup{r}_1}, \dots, y_{\tup{r}_n})$. Then
$\pi_L(I)$ is an $n$-ary weighted relation expressible over
$\RwRelClone(\Gamma)$; we will denote it by $\rho'$. For any
$n$-tuple $\tup{x} \in D^n$, we have
\begin{align}
\rho'(\tup{x})
&= \min_{\{ (y_{\tup{z}_1}, \dots, y_{\tup{z}_m}) \in D^m ~|~
            (y_{\tup{r}_1}, \dots, y_{\tup{r}_n}) = \tup{x} \}}
     \psi(y_{\tup{z}_1}, \dots, y_{\tup{z}_m}) \\
&= \min_{\{ f : D^k \to D ~|~
            (f(\tup{r}_1), \dots, f(\tup{r}_n)) = \tup{x} \}}
     \psi(f(\tup{z}_1), \dots, f(\tup{z}_m)) \\
&= \min_{f(R) = \tup{x}} \psi(f(Z)) =
\min_{f(R) = \tup{x}} \alpha(f) \,.
\end{align}
Analogously, by replacing $\psi$ with $\psi_0$ in the gadget we obtain an
$n$-ary weighted relation $\rho_0'$ for which $\rho_0'(\tup{x}) =
\min_{f(R)=\tup{x}} \alpha_0(f)$.

For any $\tup{x} \in \Feas(\rho)$ and $k$-ary operation $f$ such that $f(R) =
\tup{x}$, it holds $\alpha(f) \geq \rho(f(R)) = \rho(\tup{x})$. As $R$ is
a list of all elements of $\Feas(\rho)$, there is a projection $f$ such that
$f(R) = \tup{x}$; for it we have $\alpha(f) = \rho(f(R)) = \rho(\tup{x})$.
Therefore, $\rho'(\tup{x}) = \rho(\tup{x})$. Similarly we get
$\rho_0'(\tup{x}) = 0$ for any $\tup{x} \in \Feas(\rho)$.

We are almost done; the last issue is that $\rho'(\tup{x})$ may be finite
also for some $\tup{x} \not\in \Feas(\rho)$. But $f(R) \not\in \Feas(\rho)$
implies $f \in Q$, and in that case $\alpha_0(f)$ is positive. Therefore,
$\Opt(\rho_0')$ is finite only on $\Feas(\rho)$, and $\rho' + \Opt(\rho_0') =
\rho$.
\end{proof}

\begin{theorem}
\label{thGaloisOmega}
For any finite $D$ and any $\Omega \subseteq \Rwops_D$, $\RwPol(\RImp(\Omega)) =
\RwClone(\Omega)$.
\end{theorem}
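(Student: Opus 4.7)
The plan is to prove the two inclusions of the claimed equality. The easy inclusion $\RwClone(\Omega)\subseteq\RwPol(\RImp(\Omega))$ follows because every $\omega\in\Omega$ improves every $\gamma\in\RImp(\Omega)$ by definition (so $\Omega\subseteq\RwPol(\RImp(\Omega))$) and the right-hand side is a weighted clone by \Cref{lmWPolIsClosedClone}; hence it contains the smallest weighted clone containing $\Omega$. A byproduct is $\RImp(\Omega)=\RImp(\RwClone(\Omega))$, so for the nontrivial inclusion I may assume without loss of generality that $\Omega$ itself is a weighted clone, and write $C=\supp(\Omega)=\Pol(\RImp(\Omega))$ (by \Cref{lmSupportEqPolImp}).

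Fix an arity $k$ and $\omega^*\in\RwPol^{(k)}(\RImp(\Omega))$. Since $\supp(\omega^*)\subseteq C$, I regard $\omega^*$ as a vector in the finite-dimensional Hilbert space $H=\r^{C^{(k)}}$, in which $\Omega^{(k)}$ is a closed convex cone. The proof has two stages: first, show $\omega^*\in\tilde K^{\circ\circ}$ via a gadget construction, where
\begin{equation*}
\tilde K = \bigl\{\,\omega[f_1,\dots,f_\ell] \,:\, \ell\ge 1,\ \omega\in\Omega^{(\ell)},\ f_1,\dots,f_\ell\in C^{(k)}\,\bigr\}\subseteq H
\end{equation*}
is the set of all (possibly improper) $k$-ary superpositions from $\Omega$ (note $\Omega^{(k)}\subseteq\tilde K$ via superposition with projections); second, upgrade this to $\omega^*\in\Omega^{(k)}$ using \Cref{lmProperSumImproperWeightings} and topological closedness.

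For the gadget, given $\beta\in\tilde K^\circ$ I construct an element of $\RImp(\Omega)$ whose ``$\gamma[X]$'' realises $\beta$. Set $m=|D|^k$ and fix $Z\in(D^m)^k$ whose transpose $Z^T$ enumerates $D^k$, so the $m$-tuples $f(Z)$ for $f\in C^{(k)}$ are pairwise distinct and form a $C$-invariant set $F\subseteq D^m$; define $\gamma_\beta:D^m\to\rr$ by $\gamma_\beta(f(Z))=\beta(f)$ on $F$ and $\gamma_\beta=\infty$ elsewhere. Using the identity $g(f_1(Z),\dots,f_\ell(Z))=g[f_1,\dots,f_\ell](Z)$ together with the clone property of $C$, for every $\omega\in\Omega^{(\ell)}$ and $X=(f_1(Z),\dots,f_\ell(Z))\in F^\ell$ one computes
\begin{equation*}
\langle\omega,\gamma_\beta[X]\rangle = \langle\omega[f_1,\dots,f_\ell],\beta\rangle \le 0,
\end{equation*}
the inequality being $\beta\in\tilde K^\circ$; hence $\gamma_\beta\in\RImp(\Omega)$. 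Since $\omega^*$ improves every element of $\RImp(\Omega)$, taking $X=Z$ and noting that $\gamma_\beta[Z]$ restricted to $C^{(k)}$ equals $\beta$ gives $\langle\omega^*,\beta\rangle\le 0$, so $\omega^*\in\tilde K^{\circ\circ}$ by the bipolar theorem.

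The main obstacle is stage two, since $\tilde K$ strictly contains $\Omega^{(k)}$ (improper superpositions are not themselves weightings). Write $\omega^*=\lim_n\nu_n$ with each $\nu_n$ a finite non-negative combination of elements of $\tilde K$; for every non-projection $h\in C^{(k)}$ (finitely many, as $D$ is finite), $h\in\supp(\Omega)$ yields a fixed $\mu_h\in\Omega^{(k)}$ with $\mu_h(h)>0$. Setting $c_{h,n}=\max(0,-\nu_n(h))/\mu_h(h)$ and $\nu_n'=\nu_n+\sum_h c_{h,n}\mu_h$ produces a proper weighting (the $\mu_h$ are non-negative on non-projections, so the corrections zero out the negativities of $\nu_n$ without introducing new ones), hence $\nu_n'\in\Omega^{(k)}$ by \Cref{lmProperSumImproperWeightings}. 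Since $\nu_n(h)\to\omega^*(h)\ge 0$ for every non-projection $h$, each $c_{h,n}\to 0$, so $\nu_n'\to\omega^*$; topological closedness of $\Omega^{(k)}$ then gives $\omega^*\in\Omega^{(k)}\subseteq\RwClone(\Omega)$, completing the proof.
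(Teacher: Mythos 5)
Your proof is correct and follows essentially the same strategy as the paper's: identify $k$-ary weightings with vectors in $H=\mathbb{R}^{C^{(k)}}$, form the same set $\tilde K$ (the paper's $W$) of possibly improper superpositions from $\Omega$, use the gadget to show that vectors in $\tilde K^\circ$ correspond to $m$-ary weighted relations in $\RImp(\Omega)$, deduce $\omega^*\in\tilde K^{\circ\circ}$, and finish with \Cref{lmProperSumImproperWeightings} and topological closedness. The one place you go beyond the paper's text is the final closure step: the paper passes from $\mu\in W^{\circ\circ}$ to $\mu\in\RwClone(\Omega)$ rather tersely (it only records that \emph{proper} finite combinations of $W$ lie in $\RwClone(\Omega)$, not why the limit does), whereas you explicitly construct proper approximants $\nu_n'=\nu_n+\sum_h c_{h,n}\mu_h$ and verify $\nu_n'\to\omega^*$, which makes the appeal to topological closedness airtight.
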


\begin{proof}
We begin with the inclusion $\RwClone(\Omega) \subseteq \RwPol(\RImp(\Omega))$.

Weightings in $\Omega$ are weighted polymorphisms of all weighted relations in
$\RImp(\Omega)$, so $\Omega \subseteq \RwPol(\RImp(\Omega))$, and hence
$\RwClone(\Omega) \subseteq \RwClone(\RwPol(\RImp(\Omega)))$. By
\Cref{lmWPolIsClosedClone}, we have that $\RwPol(\RImp(\Omega))$ is a weighted
clone, so $\RwClone(\RwPol(\RImp(\Omega))) = \RwPol(\RImp(\Omega))$.

Now we prove that for any $k \geq 1$ and any $k$-ary weighting $\mu \in
\RwPol(\RImp(\Omega))$, it holds $\mu \in \RwClone(\Omega)$. First,
let us establish the clone of operations we will be working with. Let $C$ be the
smallest clone containing $\supp(\Omega)$. The support of $\RwClone(\Omega)$ is
itself a clone (by \Cref{lmSupportWCloneIsClone}) so we also have $C =
\supp(\RwClone(\Omega))$. As in the proof of \Cref{thGaloisGamma}, we will
represent $k$-ary weightings by vectors of the Hilbert space $H = C^{(k)} \to
\mathbb{R}$, and identify those vectors with certain $m$-ary weighted relations (where $m
= |D|^k$).

The outline of the proof is as follows.
We transform $\Omega$ into a set $W$ of $k$-ary weightings. Although some of
these weightings may be improper, any proper weighting obtained as their
non-negative linear combination belongs to $\RwClone(\Omega)$. The polar cone
$W^\circ$ consists of $m$-ary weighted relations improved by $\Omega$, and its
polar cone $W^{\circ\circ}$ hence contains $\mu$. As the polar cone of the polar
cone of a set is the closure of the smallest convex cone containing this set, we
get $\mu \in \RwClone(\Omega)$.

Recall the correspondence between a subset of $m$-ary weighted relations and $H$
from the proof of \Cref{thGaloisGamma}. This time, we are working with clone
$C$, so we define $F$ as
\begin{equation}
F = \left\{ f(Z) ~\middle|~ f \in C^{(k)} \right\} \,.
\end{equation}
Let $\Gamma$ be the set of all $m$-ary weighted relations $\gamma$ with
$\Feas(\gamma) = F$. Similarly as before, there is a bijection between $\Gamma$
and $H$: the corresponding vector to a weighted relation $\gamma \in \Gamma$ is
$\gamma[Z]$.

We show that $k$-ary polymorphisms of any $\gamma \in \Gamma$ are precisely the
operations from $C^{(k)}$. Let $f \in C^{(k)}$. For any $X \in F^k$, there are
$k$-ary operations $g_1, \dots, g_k \in C^{(k)}$ such that $X = (g_1(Z), \dots,
g_k(Z))$. So we have $f(X) = f[g_1, \dots, g_k](Z) \in F$ because $f[g_1, \dots,
g_k] \in C^{(k)}$. Conversely, let $f$ be a $k$-ary operation not belonging to
$C^{(k)}$. Certainly $Z = (\e{1}{k}(Z), \dots, \e{k}{k}(Z)) \in F^k$, but $f(Z)
\not\in F$. Therefore, $f$ is not a polymorphism of $\gamma$.

Let us define a set $W \subseteq H$ as
\begin{equation}
W = \left\{ \omega[g_1,\dots,g_\ell] \in H ~\middle|~
            \ell\geq1 \wedge \omega\in\Omega^{(\ell)} \wedge
            g_1,\dots,g_\ell\in C^{(k)}
    \right\} \,.
\end{equation}
We claim that for any vector in the polar cone $W^\circ$, the corresponding
weighted relation is improved by $\Omega$. Let $\gamma \in \Gamma$ be a weighted
relation such that $\gamma[Z] \in W^\circ$, $\omega \in \Omega$ be an $\ell$-ary
weighting, and $X \in F^\ell$. Then there are $k$-ary operations $g_1, \dots,
g_\ell \in C^{(k)}$ for which $X = (g_1(Z), \dots, g_\ell(Z))$, and we have
\begin{align}
\sum_{f\in\supp(\omega)} \omega(f)\cdot\gamma(f(X))
&= \sum_{f\in C^{(\ell)}} \omega(f)\cdot\gamma(f[g_1,\dots,g_\ell](Z)) \\
&= \sum_{f\in C^{(k)}} \omega[g_1,\dots,g_\ell](f)\cdot\gamma(f(Z)) \\
&= \langle \omega[g_1,\dots,g_\ell], \gamma[Z] \rangle \leq 0 \,.
\end{align}

Weighting $\mu$ is a weighted polymorphism of $\RImp(\Omega)$, so it improves
any weighted relation $\gamma$ corresponding to a vector in $W^\circ$. Firstly,
this implies $\supp(\mu) \subseteq C^{(k)}$; we can therefore view $\mu$ as a
vector of $H$. Secondly, $\mu$ satisfies Inequality~\eqref{eq:wpolNEW}
for $\gamma$ and any $X \in F^k$. In particular, $Z \in F^k$, so we get $\langle
\mu, \gamma[Z] \rangle \leq 0$ and thus $\mu \in W^{\circ\circ}$.

Set $W^{\circ\circ}$ is the closure of the smallest convex cone containing $W$.
By \Cref{lmProperSumImproperWeightings}, any proper weighting obtained as a
non-negative linear combination of weightings from $W$ belongs to
$\RwClone(\Omega)$. Therefore, $\mu \in \RwClone(\Omega)$.
\end{proof}

\subsection{Complexity Consequences}
\label{sub:complexity}

When studying the computational complexity of constraint languages, the focus on
weighted relational clones is justified by \Cref{thm:relclonetract}. The aim of
this section is to discuss the consequences of our changes in the definition of
weighted relational clones (allowing real weights and requiring weighted
relational clones to be closed under operator $\Opt$ and to be topologically
closed) on the validity of that theorem. We will assume the arithmetic model of
computation, i.e.\ basic arithmetic operations with real numbers take a constant
time.

First we show that adding operator $\Opt$ preserves the tractability of weighted
relational clones.

\begin{theorem}\label{thm:OPT}
Let $\Gamma \subseteq \Rwrel_D$ be a finite constraint language and
$\gamma\in\Gamma$. Then $\VCSP(\Gamma\cup\{\Opt(\gamma)\})$ polynomial-time
reduces to $\VCSP(\Gamma)$.
\end{theorem}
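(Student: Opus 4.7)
The plan is to reduce an instance $I$ of $\VCSP(\Gamma \cup \{\Opt(\gamma)\})$ to an instance $I'$ of $\VCSP(\Gamma)$ by replacing every hard constraint $\Opt(\gamma)(\tup{y})$ with sufficiently many copies of the soft constraint $\gamma(\tup{y})$. The intuition is that a large enough multiplicity penalises any assignment that fails to send $\tup{y}$ to a minimum-value tuple of $\gamma$, thereby emulating the hard constraint $\Opt(\gamma)$ by a large weight on $\gamma$ itself.

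First, let $c_{\min}$ denote the smallest finite weight taken by $\gamma$. If $\gamma$ takes only the values $c_{\min}$ and $\infty$, then $\Feas(\gamma)=\Opt(\gamma)$ as sets and a single copy of $\gamma$ already implements $\Opt(\gamma)$ up to an additive constant in the objective. Otherwise let $c_{\text{next}}$ be the smallest finite weight of $\gamma$ strictly greater than $c_{\min}$ and set $\delta = c_{\text{next}} - c_{\min} > 0$; these quantities depend only on $\gamma$ and are constants of the reduction. Suppose $I$ consists of constraints $\{\gamma_i(\tup{x}_i):i\in A\}$ from $\Gamma$ together with constraints $\{\Opt(\gamma)(\tup{y}_j):j\in B\}$, and let $W$ be the maximum absolute value of any finite weight occurring in a member of $\Gamma$; $W$ is a constant depending only on $\Gamma$. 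Choose an integer $N > 2|A|W/\delta$ and form $I' \in \VCSP(\Gamma)$ by deleting each $\Opt(\gamma)(\tup{y}_j)$ and adding $N$ copies of $\gamma(\tup{y}_j)$ in its place. The instance $I'$ has polynomial size in $|I|$ and is produced in polynomial time.

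The verification I would carry out proceeds as follows. For every assignment $s$ that is feasible for $I$, a direct calculation gives $I'(s) = I(s) + N|B|c_{\min}$ (modulo the additive constant coming from the representation of $\Opt(\gamma)$). For every assignment $s$ with $s(\tup{x}_i) \in \Feas(\gamma_i)$ for all $i$ but $s(\tup{y}_j) \in \Feas(\gamma) \setminus \Opt(\gamma)$ for some $j$, the cost $I'(s)$ exceeds the shifted baseline $N|B|c_{\min}$ by at least $N\delta$ on the $\gamma$-copies, whereas the $\Gamma$-part can differ from the $I$-feasible optimum by at most $2|A|W$; the choice of $N$ forces $N\delta > 2|A|W$. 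Thus if $I$ admits a feasible assignment, every optimum of $I'$ is also feasible for $I$ and is an optimum of $I$ after subtracting the additive constant, while if $I$ is infeasible then any optimum of $I'$ either is infeasible in $I'$ itself or witnesses a violated $\Opt(\gamma)$ constraint. After invoking the oracle on $I'$, one inspects the returned assignment for membership in $\Opt(\gamma)$ on the coordinates $\tup{y}_j$ and reports accordingly.

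The argument is a standard penalty construction and I expect no substantial obstacle beyond careful bookkeeping: tracking the additive constants that arise from $c_{\min} \neq 0$ and from the convention that $\Opt(\gamma)$ is a $\{c_0,\infty\}$-valued relation for some $c_0$, and checking that $N$ can indeed be computed from $\Gamma$ and $|I|$ within the arithmetic model of computation assumed in the paper.
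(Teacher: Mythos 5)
Your proposal is correct and follows essentially the same penalty-method reduction as the paper: replace each $\Opt(\gamma)$-constraint by polynomially many copies of $\gamma$, with the multiplicity chosen large enough (from the gap $\delta$ between the two smallest finite values of $\gamma$ and the maximum finite weight in $\Gamma$) so that any assignment violating an $\Opt(\gamma)$-constraint is strictly worse than any assignment respecting all of them. The paper streamlines the bookkeeping by first normalising all weights in $\Gamma$ to be non-negative and the minimum of $\gamma$ to $0$, whereas you carry the constants $c_{\min}$ and $c_0$ explicitly, but the argument is the same.
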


\begin{proof}
Adding a constant to all weights of a weighted relation changes the value of
every assignment by the same amount, and hence does not affect tractability.
Without loss of generality, we may therefore assume that all weighted relations
in $\Gamma$ assign non-negative weights and that the minimum weight assigned by
$\gamma$ equals $0$. We will also assume that $\gamma$ is not a relation,
otherwise $\Opt(\gamma) = \gamma$ so the claim would hold trivially.  Let us
denote by $m$ the smallest positive weight assigned by $\gamma$, and by $M$ the
largest finite weight assigned by any $\gamma' \in \Gamma$.

Let $I \in \VCSP(\Gamma \cup \{ \Opt(\gamma) \})$ be an instance with
$q$ constraints. We replace every constraint of the form $\Opt(\gamma)(\tup{x})$
in $I$ with $(q \cdot \lceil M/m \rceil + 1)$ copies of
$\gamma(\tup{x})$, thus obtaining a polynomially larger instance $I'
\in \VCSP(\Gamma)$. Any feasible assignment for instance $I$ is also
feasible for $I'$ with the same value, which does not exceed $qM$. On the other
hand, any infeasible assignment for instance $I$ is
infeasible for $I'$, or it incurs an infinite value from a constraint
of the form $\Opt(\gamma)(\tup{x})$ in $I$ and therefore a value of
more than $qM$ in $I'$.
\end{proof}

For any constraint language $\Gamma \subseteq \Rwrel_D$, we will denote by
$\costeq{\Gamma}$ the smallest set of weighted relations containing $\Gamma$
that is closed under scaling by non-negative real constants and addition of real
constants. Analogously to~\cite[Theorem~4.3]{cccjz13:sicomp}, we would like to
show that $\Gamma$ is tractable if and only if $\costeq{\Gamma}$ is tractable.
Their proof, however, does not apply to scaling by an \emph{irrational} factor
$\alpha$, as it relies on the existence of integers $p,q$ such that $\alpha =
p/q$. In fact, we were not able to prove that real-valued scaling preserves
tractability when insisting on exact solvability. If we consider solving $\VCSP$
with an absolute error bounded by $\epsilon$ (for any $\epsilon > 0$), then
real-valued scaling \emph{does preserve} tractability, as shown in the following
theorem.

\begin{theorem}\label{thm:realScaling}
Let $\Gamma, \Gamma' \subseteq \Rwrel_D$ be finite constraint languages such
that $\Gamma$ contains only weighted relations of the form $c \cdot \gamma'$ for
$c \geq 0, \gamma' \in \Gamma'$. For any $\epsilon>0$, there is a
polynomial-time reduction that for any instance $I \in \VCSP(\Gamma)$ outputs an
instance $I' \in \VCSP(\Gamma')$ such that for any optimal assignment $s'$ of
$I'$ it holds $I(s') \in [v, v+\epsilon]$, where $v$ is the value of an optimal
assignment of $I$.
\end{theorem}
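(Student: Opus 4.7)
The plan is to simulate each constraint $c_i \cdot \gamma'_i(\mathbf{x}_i)$ of $I$ by taking an appropriate integer number of copies of $\gamma'_i(\mathbf{x}_i)$ in $I'$. Since $c_i$ may be irrational, we cannot hope to reproduce the instance exactly, but we can approximate $c_i$ from above by a rational number of the form $p_i/N$ for a common denominator $N$; multiplying the whole objective through by $N$ has no effect on the set of optimal assignments, while reducing the problem to integer multiplicities.

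Concretely, let $M$ be the maximum absolute value of a finite weight assigned by any $\gamma'\in\Gamma'$ (this is a constant, since $\Gamma'$ is finite and $D$ is finite). Given $I \in \VCSP(\Gamma)$ with $q$ constraints $c_i\cdot\gamma'_i(\mathbf{x}_i)$, I would pick $N := \lceil 4qM/\epsilon\rceil + 1$ and define $p_i := \lceil Nc_i\rceil + 1$, so that $p_i\geq 1$ and $1\leq p_i - Nc_i\leq 2$. The instance $I'\in\VCSP(\Gamma')$ consists, for each $i$, of $p_i$ copies of the constraint $\gamma'_i(\mathbf{x}_i)$ on the same variable set. The total number of constraints is at most $N\sum_i c_i + 2q$, which is polynomial in $q$ and $1/\epsilon$ because the $c_i$ come from a finite $\Gamma$, so the reduction runs in polynomial time.

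For the correctness analysis, note first that every $\gamma'_i(\mathbf{x}_i)$ appears at least once in $I'$ (as $p_i\geq 1$), so an assignment $s$ is feasible for $I'$ iff all $\gamma'_i(\mathbf{x}_i)(s)$ are finite iff $s$ is feasible for $I$; in particular the case $v=\infty$ is trivial. For feasible $s$, we have $|I'(s) - N\cdot I(s)| = \bigl|\sum_i (p_i - Nc_i)\,\gamma'_i(\mathbf{x}_i)(s)\bigr| \leq 2qM$. Let $s^*$ be optimal for $I$ and $s'$ optimal for $I'$. Then
\begin{equation}
N\cdot I(s') - 2qM \;\leq\; I'(s') \;\leq\; I'(s^*) \;\leq\; N\cdot I(s^*) + 2qM \;=\; Nv + 2qM,
\end{equation}
so $I(s') \leq v + 4qM/N \leq v + \epsilon$; combined with $I(s') \geq v$, this gives $I(s') \in [v, v+\epsilon]$.

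The main subtlety I would watch out for is the case $c_i = 0$: the constraint $0 \cdot \gamma'_i$ is the feasibility relation $\Feas(\gamma'_i)$ (using the convention $0\cdot\infty=\infty$), and naively taking $\lceil Nc_i\rceil = 0$ copies would drop this feasibility requirement entirely, possibly making an infeasible assignment of $I$ optimal for $I'$. Adding the constant $1$ in the definition of $p_i$ ensures $p_i\geq 1$ uniformly and contributes only a bounded additive error that is already absorbed into the $2qM$ term. The reason this argument fails for the exact reduction used in \cite{cccjz13:sicomp} is precisely that no finite $N$ can make $Nc_i$ integral for all $c_i$ simultaneously when some coefficients are irrational, so an exact equivalent of the objective is unavailable — but for approximate solvability, rational approximation with a polynomially large denominator suffices.
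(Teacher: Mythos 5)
Your proof is correct and follows essentially the same strategy as the paper's: multiply all coefficients by a polynomially large integer $N$ (resp.\ $b$), round up to get integer copy-counts that are at least $1$ (handling the $c_i=0$ feasibility issue), and bound the rounding error by a multiple of $qM/N$. The only cosmetic difference is that the paper first normalizes all weights to be non-negative (yielding a one-sided error bound $0\le I'(t)-bI(t)\le qM$ with the slightly tighter choice $b=\lceil qM/\epsilon\rceil$), whereas you keep possibly-signed weights and absorb both signs into an absolute-value bound $|I'(t)-N\cdot I(t)|\le 2qM$ at the cost of larger constants.
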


\begin{proof}
Again, we may assume that all weighted relations in $\Gamma$ and $\Gamma'$
assign non-negative weights. Let us denote by $M$ the largest finite weight
assigned by any weighted relation in $\Gamma'$; we may assume that $M$ is
well-defined and positive, otherwise we would have $\Gamma \subseteq \Gamma'$.
We will denote by $q$ the number of constraints in $I$ and let $b = \lceil
qM/\epsilon \rceil$.

Let instance $I' \in \VCSP(\Gamma')$ have the same set of variables as $I$. For
any constraint $c_i \cdot \gamma_i'(\tup{x}_i)$ of $I$, we add $(\lfloor bc_i
\rfloor + 1)$ copies of constraint $\gamma_i'(\tup{x}_i)$ into $I'$. Note that
any feasible assignment of $I$ is a feasible assignment of $I'$, and vice versa.
Let us assume that $I$ admits a feasible solution. As $(\lfloor bc_i \rfloor +
1) - bc_i \in (0,1]$, we have
\begin{equation}
b \cdot I(t) \leq I'(t) \leq b \cdot I(t) + qM
\end{equation}
for any feasible assignment $t$. Let $s$ be an optimal assignment of $I$; we get
\begin{equation}
b\cdot I(s') \leq I'(s') \leq I'(s) \leq b\cdot I(s)+qM \,,
\end{equation}
and therefore $I(s') \leq I(s) + \epsilon = v + \epsilon$.
\end{proof}

Taking a topological closure of a language also preserves tractability with a
bounded absolute error, as the following theorem shows.

\begin{theorem}\label{thm:closureTractability}
Let $\Gamma \subseteq \Rwrel_D$ be a constraint language and denote by
$\overline{\Gamma}$ its closure. For any $\epsilon > 0$, there is a
polynomial-time reduction that for any instance $I \in \VCSP(\overline{\Gamma})$
outputs an instance $I' \in \VCSP(\Gamma)$ with the same variables such that any
assignment $t$ is either infeasible for both $I$ and $I'$, or is feasible for
both and $|I(t) - I'(t)| \leq \epsilon$.
\end{theorem}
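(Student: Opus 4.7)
The plan is to replace each constraint weighted relation appearing in $I$ by a sufficiently close weighted relation from $\Gamma$ (one with identical feasibility set and weights close in absolute value) and then bound the cumulative approximation error by $\epsilon$. Let $q$ denote the number of constraints in $I$ and set $\delta = \epsilon/q$.

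The first step is the approximation. By the description of the topology on $\Rwrel_D$ as a disjoint union of Hilbert spaces indexed by pairs $(m,F)$, a weighted relation $\gamma \in \overline{\Gamma}$ is a limit point of $\Gamma$ within a single component: there exists $\gamma' \in \Gamma$ of the same arity $m$ with $\Feas(\gamma') = \Feas(\gamma)$ and Hilbert-norm distance $\|\gamma - \gamma'\|$ at most $\delta$. Since the inner product is $\inn{\alpha}{\beta} = \sum_{\tup{x}\in F} \alpha(\tup{x})\beta(\tup{x})$, closeness in the norm implies pointwise closeness: $|\gamma(\tup{x}) - \gamma'(\tup{x})| \leq \delta$ for every $\tup{x} \in \Feas(\gamma)$. (If $\gamma \in \Gamma$ already, take $\gamma' = \gamma$.) I would apply this to each of the at most $q$ distinct weighted relations appearing as constraints in $I$, and construct $I'$ by replacing each constraint $\gamma_i(\tup{x}_i)$ with $\gamma_i'(\tup{x}_i)$ on the same variables.

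The second step is the correctness check, which is essentially a direct computation. Fix any assignment $t$. Since $\Feas(\gamma_i') = \Feas(\gamma_i)$ for each $i$, the set of feasible assignments is identical for $I$ and $I'$; in particular $t$ is infeasible for $I$ iff infeasible for $I'$, in which case $I(t) = I'(t) = \infty$. If $t$ is feasible, then
\begin{equation}
|I(t) - I'(t)| \;\leq\; \sum_{i=1}^{q} \bigl|\gamma_i(t(\tup{x}_i)) - \gamma_i'(t(\tup{x}_i))\bigr| \;\leq\; q\delta \;=\; \epsilon,
\end{equation}
which is the required bound.

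The main obstacle is purely the computational aspect of the reduction. Mathematically the argument only uses that every point of $\overline{\Gamma}$ has arbitrarily close neighbours in $\Gamma$ with identical feasibility set (a direct consequence of the disjoint-union-of-Hilbert-spaces topology). To claim the reduction is polynomial-time in the arithmetic model, one assumes an appropriate oracle/presentation that, given $\gamma \in \overline{\Gamma}$ and $\delta > 0$, returns an element of $\Gamma$ within distance $\delta$; then building $I'$ amounts to one such query per constraint of $I$, which is $O(q)$ oracle calls and linear-time bookkeeping. Under this standard convention the reduction is polynomial-time, completing the proof.
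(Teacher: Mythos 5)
Your proof is correct and takes essentially the same approach as the paper: replace each constraint weighted relation $\gamma$ in $I$ by a $\gamma' \in \Gamma$ in the same topological component (so same arity and feasibility set) at distance at most $\epsilon/q$, then sum the per-constraint errors. You spell out two details the paper leaves implicit — that Hilbert-norm closeness implies pointwise closeness, and that polynomial-time here presupposes an oracle presentation of the closure — but the underlying argument is identical.
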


\begin{proof}
Let us denote by $q$ the number of constraints in $I$. For any $\gamma \in
\overline{\Gamma}$, there is a weighted relation $\gamma' \in \Gamma$ of the
same arity and with $\Feas(\gamma) = \Feas(\gamma')$ such that the distance
between $\gamma$ and $\gamma'$ is at most $\epsilon / q$. We obtain the sought
instance $I'$ by replacing all constraints $\gamma$ from $I$ with their
counterparts $\gamma'$.
\end{proof}

We finish this section with a discussion of the difficulty of improving
\Cref{thm:realScaling} to exact solvability (to optimality).
Let $\Gamma$ be a finite constraint language and $\gamma\in\Gamma$. We would like to
prove that $\VCSP(\Gamma\cup\{c\cdot\gamma\})$ polynomial-time reduces to
$\VCSP(\Gamma)$, where $c\in\r_{\geq 0}$. Given
$I\in\VCSP(\Gamma\cup\{c\cdot\gamma\})$, let
\begin{equation}
   \delta_I = \min \Big\{
     |I(s_1)-I(s_2)| ~\Big|~
     \text{$s_1, s_2$ are solutions to $I$ with different values}
   \Big\}\,.
\end{equation}
If we choose an $\epsilon < \delta_I$, we obtain an optimal solution of $I$ by
\Cref{thm:realScaling}. However, it is not clear how fast the value of
$\delta_I$ approaches $0$ as the size of $I$ grows to infinity, and whether it
is possible to compute it in polynomial time.

\section*{Acknowledgments}

The authors are grateful to the anonymous referees for their helpful comments
and suggestions.

%\bibliographystyle{plain}
%\bibliography{fz15gc}

\newcommand{\noopsort}[1]{}

\end{document}